\newcommand{\doublewidetilde}[1]{{%
		\mathpalette\double@widetilde{#1}%
}}
\newcommand{\double@widetilde}[2]{%
	\sbox\z@{$\m@th#1\widetilde{#2}$}%
	\ht\z@=.5\ht\z@
	\widetilde{\box\z@}%
}
\newtheorem{definition}{Definition}
\newtheorem{theorem}{Theorem}
\newtheorem{lemma}{Lemma}
\newtheorem{remark}{Remark}
\begin{document}
\title{\huge User Scheduling and Power Allocation for Precoded Multi-Beam High Throughput Satellite Systems with Individual Quality of Service Constraints}
\author{Trinh~Van~Chien, \textit{Member}, \textit{IEEE}, Eva~Lagunas, \textit{Senior Member}, \textit{IEEE}, Tung~Hai~Ta, Symeon~Chatzinotas, \textit{Senior Member}, \textit{IEEE}, and Bj\"{o}rn~Ottersten, \textit{Fellow}, \textit{IEEE}\\
\thanks{T. V. Chien, E. Lagunas, S. Chatzinotas, and B. Ottersten are with Interdisciplinary Centre for Security, Reliability and Trust (SnT), University of Luxembourg, Luxembourg (Email: vanchien.trinh@uni.lu, eva.lagunas@uni.lu, symeon.chatzinotas@uni.lu, and bjorn.ottersten@uni.lu).}
\thanks{T. H. Ta is with School of Information and Communication Technology (SoICT), Hanoi University of Science and Technology, Vietnam (Email: tungth@soict.hust.edu.vn).}
\thanks{This work has been partially supported by the Luxembourg National Research Fund (FNR) under the project FlexSAT ``Resource Optimization for Next Generation of Flexible SATellite Payloads" (C19/IS/13696663) and the European Space Agency (ESA) funded activity ``CGD - Prototype of a Centralized Broadband Gateway for Precoded Multi-beam Networks". The views of the authors of this paper do not necessarily reflect the views of ESA. Parts of this paper was presented at IEEE PIMRC 2021 \cite{trinh2021user}.}
	%\thanks{
	%	Parts of this paper were submitted to IEEE ICC 2021.
	%}
}

\maketitle

\begin{abstract}
For extensive coverage areas, multi-beam high throughput satellite (MB-HTS) communication is a promising technology that plays a crucial role in delivering broadband services to many users with diverse Quality of Service (QoS) requirements. This paper focuses on MB-HTS systems where all beams reuse the same spectrum. In particular, we propose a novel user scheduling and power allocation design capable of providing guarantees in terms of the individual QoS requirements while maximizing the system throughput under a limited power budget. Precoding is employed in the forward link to mitigate  mutual interference at the users in multiple-access scenarios over different coherence time intervals. The combinatorial optimization structure from user scheduling requires an extremely high cost to obtain the global optimum even when a reduced number of users fit into a time slot. Therefore, we propose a heuristic algorithm yielding good trade-off between performance and computational complexity, applicable to a static operation framework of geostationary (GEO) satellite networks. Although the power allocation optimization is signomial programming, non-convex on a  standard form, the solution can be lower bounded by the global optimum of a geometric program with a hidden convex structure. A local solution to the joint user scheduling and power allocation problem is consequently obtained by a successive optimization approach. Numerical results demonstrate the effectiveness of our algorithms on large-scale systems by providing better QoS satisfaction combined with outstanding overall system throughput.
\end{abstract}
%\vspace*{-0.2cm}
\begin{IEEEkeywords}
Multi-beam high throughput satellite, user scheduling, power allocation, quality of service, sum throughput optimization.
\end{IEEEkeywords}
%\vspace*{-0.2cm}
\IEEEpeerreviewmaketitle
\vspace*{-0.3cm}
\section{Introduction}
%\vspace*{-0.2cm}
Mobile network generations, specially the latest $5$-th generation (5G) of cellular networks, have been designed as a response to the exponential growth of high data traffic and dense wireless devices \cite{kota20216g,tataria20216g}. Future generations, regarding beyond-5G evolution, have attempted to improve the system performance over prevalent wireless networks and further allow envisioned new applications in robotics, wireless security, and the internet of things (IoTs) comprising a massive number of heterogeneous devices \cite{giordani2020toward,chu2021robust}. By the use of revolutionary technologies as massive multiple-input multiple-output (MIMO) \cite{massivemimobook, 9531522}, mmWave communications \cite{hong2021role}, and network densification \cite{rodriguez2021secure}, terrestrial networks can handle multiple access scenarios by simultaneously serving many users with ubiquitous services, low latency, and high reliability \cite{yan2021scalable}. Nonetheless, terrestrial networks have mainly concentrated on urban and suburban areas but cannot remote regions and yield the poor coverage, for instance, oceans or high mountains, and especially under harsh environments with the presence of enormous obstacles \cite{fu2021collaborative,abdu2021flexible}. The systems will be strenuous to handle many users with heterogeneous services.  Under the broad geographic coverage, satellite technologies have effectively controlled this matter and provided global wireless services. Hence, satellite communications are being investigated by several major key network operators and  related  standardization  bodies including the third  generation  partnership  project  (3GPP), with its non-terrestrial  networks for 5G-and-beyond wireless systems that has been initially  introduced  in Release~$15$ \cite{3gpp2019study}.  
  
Multi-beam high throughput satellite (MB-HTS) systems are known to provide ubiquitous high-speed services of universal access to many users in a large coverage area that is inaccessible, insufficient, and expensive places with conventional terrestrial networks \cite{9210567,8746876, abdu2021flexible, zhang2019spectrum,ivanov2020spatial}. Unlike mono-beam satellites, the received signal strength can be increased thanks to new antenna architectures that are able to conform narrow beam spots on the Earth, resulting in high beamforming gains and spatially multiplexed communications. MB-HTS systems can provide significant improvements in the instantaneous throughput to concurrently support massive number of users with individual rate demands \cite{hofmann2019direct,7765141, hofmann2019ultranarrowband}. The multi-spot beams enable an MB-HTS system to offer more service flexibility to satisfy heterogeneous demands from multiple users sharing the same time and frequency resource while maintaining inter-beam interference at  acceptable levels. To further upgrade the channel capacity, most MB-HTS systems operate in the Ka-band, moving the feeder links to much higher frequencies than those of the contemporary mobile networks leading to seamlessly fast broadband connectivity across a large coverage area \cite{bao2021cooperative}. The performance of MB-HTS systems with aggressive frequency reuse heavily depends on both the precoding design and the user scheduling mechanism, which should be jointly optimized to obtain the globally optimal performance due to its coupled nature as pointed in \cite{vazquez2016precoding, bandi2019joint}. Joint optimization is extraordinarily challenging for practical systems since precoding coefficients are chosen based on the scheduled users' channel state information (CSI), and the scheduled users' performance depends on the precoding design, thus ended up with a very complex procedure. De facto, a system performance close to the optimal can be attained when users with semi-orthogonal channel vectors are selected \cite{Yoo2006a,8401547, ivanov2020spatial}. By fixing the precoding technique, most of the previous works have focused on the user scheduling designs for a single time slot by estimating the orthogonality between the channel vectors using, for example, the cosine similarity metric \cite{7091022} or the semi-orthogonality projection \cite{Yoo2006a}. However, the user scheduling over multiple time slots, i.e., block scheduling design, will be different and more challenging to maintain the QoSs of scheduled users due to various possible combinations conditioned on a large number of available users \cite{chen2020task}. To the best of the authors' knowledge, it is the first time MB-HTS block scheduling with individual QoS constraints and power control has been ever investigated in the literature. 

By considering an extended period with many time slots, this paper explores the benefits of block-based user scheduling in enhancing the system throughput whilst still making efforts to maintain the QoS requirement for each user in MB-HTS systems with full frequency reuse. Our main contributions are listed as follows:
\begin{itemize}
 \item We formulate a novel user scheduling problem spanning different time slots and whose goal is the sum-throughput maximization for an observed window time and the user-specific QoS constraints. Due to a combinatorial structure, the global optimum to this problem may be obtained by an exhaustive search of the parameter space, but discarded due to the impracticably arise from the exponential increase of the potential scheduling solutions if many users simultaneously request to access the network.
 \item We propose a heuristic algorithm that yields a local solution in polynomial time but can work for satellite systems providing service to a huge amount of user terminals, such as GEO satellite systems. For each time slot, the proposed algorithm schedules the available users conditioned on the monotonically non-decreasing sum throughput utility function and the individual QoSs. We also theoretically provide the convergence analysis and the computational complexity order required to operate such algorithm in practical systems. A flexible framework is also proposed allow the system to serve more users.
 \item We extend the scheduling framework to include the power coefficients as optimization variables. Despite a given scheduler-user set, the extended problem is a signomial program, whose global optimum is lower bounded by the solution of a geometric program with a hidden convex structure. Hence, the proposed heuristic scheduling algorithm and the successive optimization approach are exploited to obtain a stationary point.
 \item The proposed algorithms are evaluated by numerical results with a Defocused Phased Array-Fed Reflector Beam-Pattern provided by the European Space Agency (ESA) in the context of \cite{ESA}. Our solutions outperform some benchmarks on a long-term observation. The users' QoS requirements formulated with specific user data demands are satisfied with a high percentage. It is also shown that power allocation plays a crucial role in maintaining the QoS requirements of scheduled users.
 \end{itemize}
The remaining of this paper is organized as follows: Section~\ref{Sec:SysPer} presents in detail the system model and the net throughput per user over an observed window time. In  Section~\ref{Sec:SumThroughput}, under the QoS requirements at the scheduled users, the sum throughput optimization problem is formulated and solved for a given transmit power level at the satellite. An extension to jointly optimize the user scheduling and power allocation is presented in Section~\ref{Sec:Joint}. Finally, Section~\ref{Sec:NumericalResults} gives the numerical results, and the main conclusions are drawn in Section~\ref{Sec:Conclusion}.

\textit{Notation}: The upper and lower bold letters denote the matrices and vectors, respectively. The superscripts $(\cdot)^H$ and $(\cdot)^T$ are the Hermitian and regular transposes. The Euclidean norm is $\| \cdot \|$, $\mathrm{tr}(\cdot)$ is the trace of a matrix, and $\mathcal{CN}(\cdot, \cdot)$ is the circularly symmetric Gaussian distribution. The union of sets is  $\cup$ and $\subseteq$ denotes the subset operator. Finally, the cardinality of a set $\mathcal{B}$ is denoted as $|\mathcal{B}|$ and $\mathcal{O}(\cdot)$ is the big-$\mathcal{O}$ notation. 
\begin{figure*}[t]
   	\begin{minipage}{0.49\textwidth}
	   	\centering
	   	\includegraphics[trim=0.3cm 0.0cm 0cm 0cm, clip=true, width=3.4in]{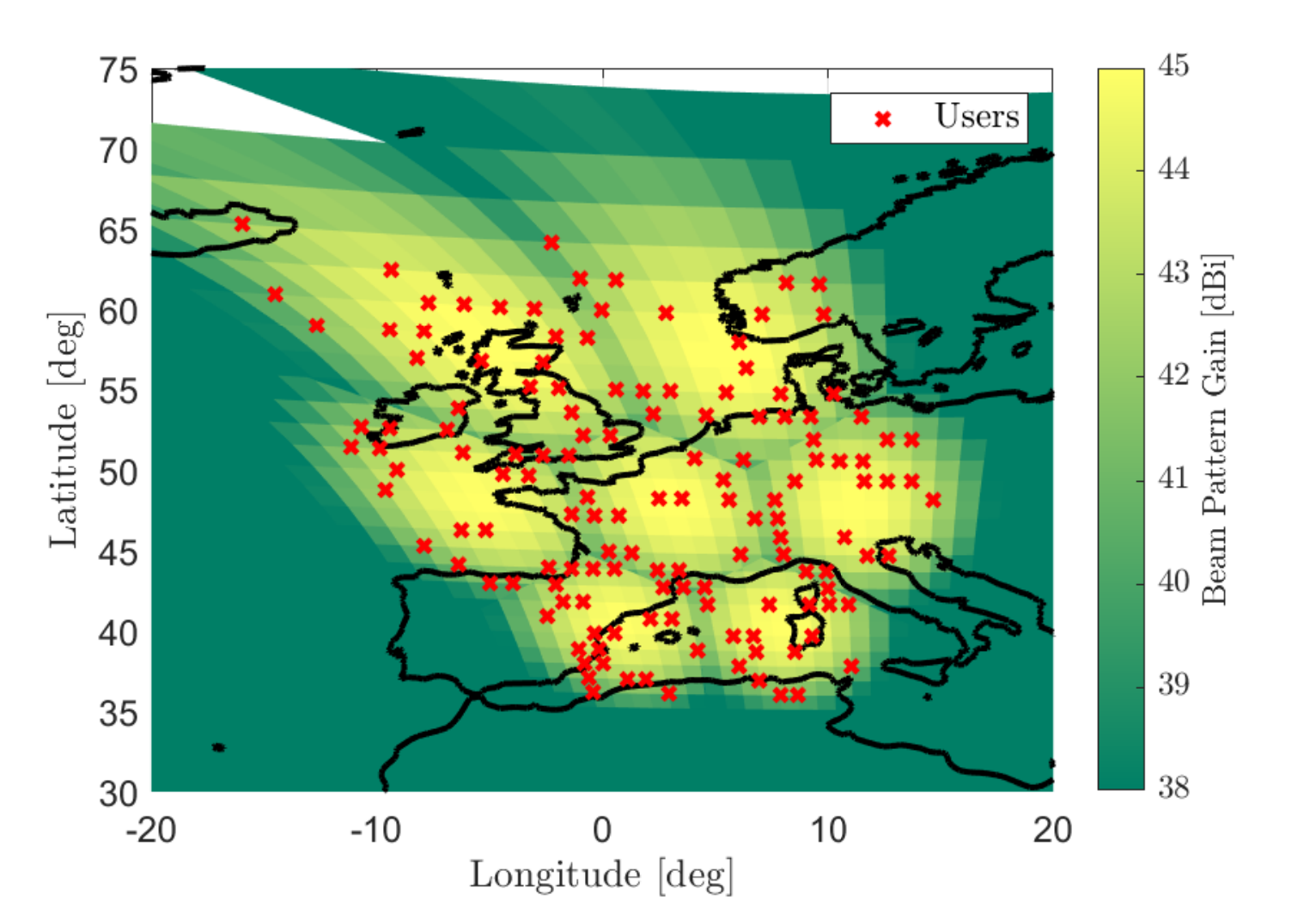} \vspace*{-0.1cm}\\
	   	$(a)$
    \end{minipage}
\hfill
    \begin{minipage}{0.49\textwidth}
    	\centering
    	\includegraphics[trim=4.7cm 7.0cm 16.45cm 3.5cm, clip=true, width=3.4in]{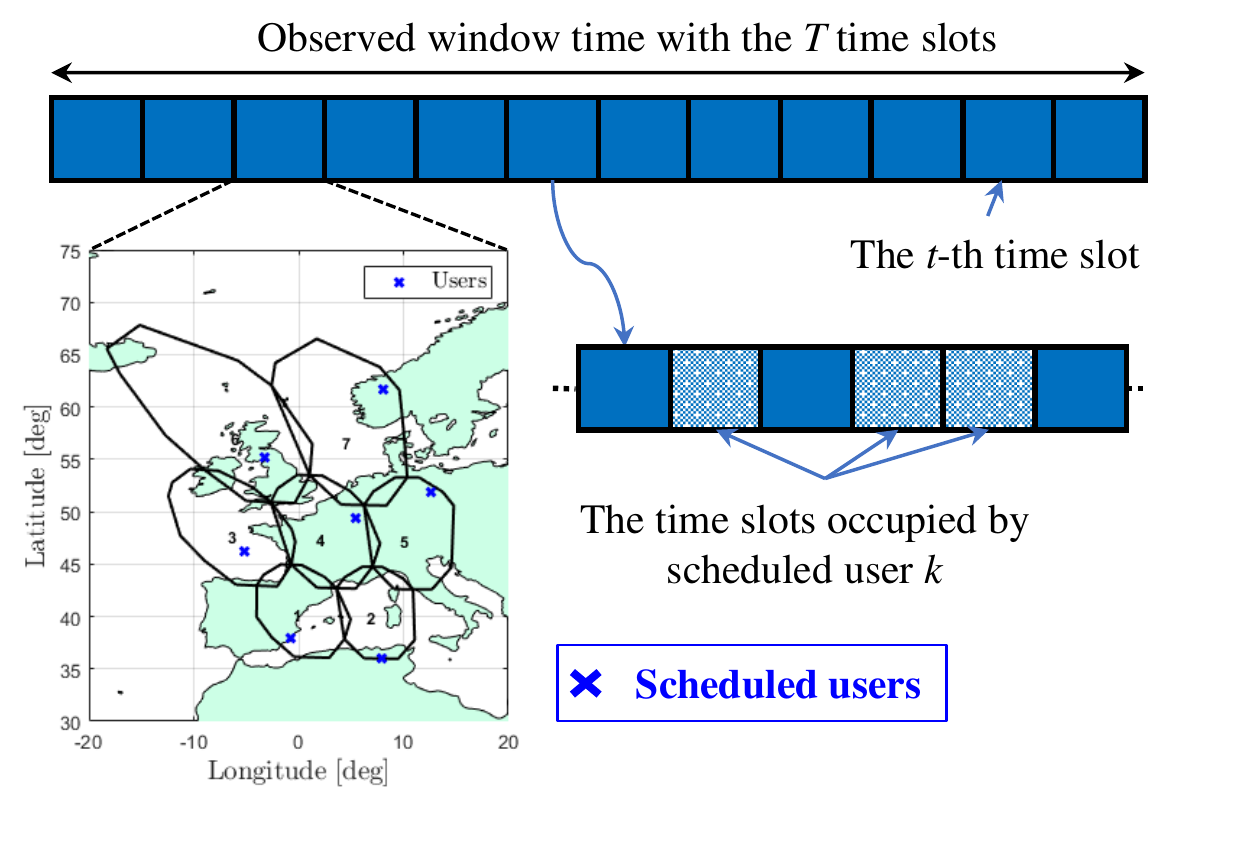} \vspace*{-0.1cm} \\
    	$(b)$
    \end{minipage}
	\caption{The considered MB-HTS system model with one GEO satellite and many available users in the coverage area: $(a)$ the beam pattern gain [dBi] and the users' positions; and $(b)$ The observed window time comprising the $T$ time slots each serving at most $M=7$ scheduled users.}
	\label{FigSysModel}
	\vspace*{-0.5cm}
\end{figure*}
\vspace*{-0.2cm}
\section{System Model \& Performance Analysis} \label{Sec:SysPer}
%\vspace*{-0.2cm}
This section introduces an MB-HTS system model in which a single user per beam is scheduled at each time instance and  flexible in the sense of using multiple user links to serve users within the same beam coverage area  the number of scheduled users. Moreover, the system includes overlapping  beams depending on the beam-pattern and the antenna architecture. The aggregated and instantaneous downlink throughput for every scheduled user is then presented under the considered scheduling framework.
\vspace*{-0.25cm}
\subsection{System Model}
%\vspace*{-0.1cm}
We consider the  downlink  of  a geostationary (GEO)  broadband  MB-HTS system that aggressively reuses the user link frequency. A precoding technique is implemented  to mitigate co-channel interference. The satellite is assumed to generate $M$ partially overlapping beams, as illustrated in Fig.~\ref{FigSysModel}(a). There are $N$ single-antenna users available with $N$ much larger than $M$ in the coverage area. In each beam, multiple users are multiplexed in a Time-Division Multiplexing (TDM) protocol in the forward link on a DVB-S2X carrier from a single gateway to these overlapping
beams \cite{9373415}. In addition, Time Division Multiple Access (TDMA) is used on the return link that enables low cost to obtain very accurate instantaneous channels. The beam pattern gain that influences the offered data throughput of each scheduled user is also illustrated.  We assume that the system performance is optimized in an observed window time comprising $T$ time slots. The system is supposed to operate in a unicast mode in which at most $M$ users can be scheduled per time slot, e.g., the blue users in Fig.~\ref{FigSysModel}(b). As the main distinction from related works \cite{7091022, deyi2021hybrid}, this paper considers the user scheduling problem  in a window time; thus, a scheduled user may occupy multiple time slots to satisfy its individual QoS requirement, as sketched in Fig.\ref{FigSysModel}(b). By focusing on the fixed-satellite service \cite{guidolin2015study}, user locations are geographically fixed, but the transmit data signals are independently distributed and mutually exclusive. Let us denote $\mathcal{K}(t)$ the scheduled-user set at the $t$-th time slot, which satisfies
\begin{equation}
\mathcal{K}(t) \subseteq \{1, \ldots, N \}  \mbox{ and } |\mathcal{K}(t)| \leq M.
\end{equation}
%where $|\mathcal{K}(t)|$ denotes the cardinality of $\mathcal{K}(t)$. 
The propagation channels are
assumed to be static in the observed window time, which is a  valid assumption for GEO satellite
communications and reasonable window lengths, especially under clear sky. Specifically, if the channel between user~$k$ and the satellite is $\mathbf{h}_k \in \mathbb{C}^M$, then we can denote the channel matrix at the $t$-th time slot as
\begin{equation}
\mathbf{H}(t) = \big[\mathbf{h}_{\pi_1}, \ldots,\mathbf{h}_{\pi_{|\mathcal{K}(t)|}}  \big] \in \mathbb{C}^{M \times |\mathcal{K}(t)|},
\end{equation}
with $\pi_1, \ldots,\pi_{|\mathcal{K}(t) |} $  being the user indices in $\mathcal{K}(t)$. Subsequently, the size of the channel matrix depends on the cardinality $|\mathcal{K}(t)|$. Based on some practical channel features, $\mathbf{H}(t)$ is formulated as
\begin{equation}
\mathbf{H}(t) =   \mathbf{B}(t) \pmb{\Phi}(t),
\end{equation}
where $\mathbf{B}(t) \in \mathbb{R}_{+}^{M \times |\mathcal{K}(t)|}$ represents different aspects in satellite communications comprising the received antenna gain, thermal noise, path loss, and satellite antenna radiation pattern with the $(m,k)-$th element defined as
\begin{equation}
b_{mk} = \frac{\lambda \sqrt{\widehat{G}_{Rk} G_{mk}}}{ 4 \pi d_{mk}}, m=1, \ldots M, k = 1, \ldots, |\mathcal{K}(t)|,
\end{equation}
where $\lambda$ is the wavelength of a plane wave; $d_{mk}$ is the distance between the $m$-th satellite antenna and user~$k$. It is safe to assume $d_{1k} = \ldots = d_{Mk}, \forall k,$ for a GEO satellite system because of long propagation distance.  The receiver antenna gain is denoted as $\widehat{G}_{Rk}$, which mainly depends on the receiving antenna aperture, whilst $G_{mk}$ is the gain defined by the satellite radiation pattern and user location; $\pmb{\Phi}(t) \in \mathbb{C}^{|\mathcal{K}(t)| \times |\mathcal{K}(t)|}$ is a diagonal matrix denoting signal phase rotations with the  $(k,k)$-th diagonal element that is $\phi_{kk} = e^{i \psi_k}, \forall k = 1, \ldots, |\mathcal{K}(t)|,$ where  $\psi_k$  is the user-related phase, which is identically and independently distributed from the satellite payload.
\vspace*{-0.25cm}
\subsection{Downlink Data Transmission}
At the $t$-th time slot, the satellite is simultaneously transmitting data signals to the scheduled users. In detail, $s_k(t)$ is the modulated data symbol for scheduled user~$k$ with $ |s_k(t)|^2 = 1$. 
The received signal at scheduled user $k$ with $k \in \mathcal{K}(t)$, denoted by $y_k(t) \in \mathbb{C}$, is thus formulated as 
\begin{equation} \label{eq:ReceivedSig}
	y_k(t) = \sum\limits_{k' \in \mathcal{K}(t)} \sqrt{p_{k'} (t)} \mathbf{h}_{k}^H \mathbf{w}_{k'}(t) s_{k'}(t) + n_k(t),
\end{equation}
where $\mathbf{w}_k(t)$ is the precoding vector used for scheduled user~$k$ with $\| \mathbf{w} (t) \| = 1$ and $p_k(t)$ is data power allocated to this user at the $t$-th time slot; $n_k(t)$ is additive noise with $n_k(t) \sim \mathcal{CN}(0,\sigma^2)$ and $\sigma^2$ being the noise variance. Although the channels are static in the observed window time, the precoding vectors vary upon time slots due to the user scheduling.
The limited power budget at the satellite can be expressed as
\begin{equation}
	\sum\limits_{k' \in \mathcal{K}(t)} p_{k'}(t) \| \mathbf{w}_{k'} (t) s_{k'}(t) \|^2 = 	\sum\limits_{k' \in \mathcal{K}(t)} p_{k'}(t) \leq P_{\max},
\end{equation}
where $P_{\max}$ is the maximum transmit power that the satellite can spend  for data symbols at the $t$-th time slot. In order to compute the instantaneous throughput of scheduled user~$k$, we recast the received signal \eqref{eq:ReceivedSig} into an equivalent form as
\begin{equation} \label{eq:ReceivedSigv1}
	\begin{split}
	 y_k(t) =& \sqrt{p_{k}(t)} \mathbf{h}_{k}^H \mathbf{w}_{k}(t) s_{k}(t) +  \sum\limits_{k' \in \mathcal{K}(t) \setminus \{ k\} } \sqrt{p_{k'}(t)}  \times\\
	 & \mathbf{h}_{k}^H \mathbf{w}_{k'}(t)  s_{k'}(t) + n_k(t),
	\end{split}
\end{equation}
where the first part contains the desired signal, while the second part is mutual interference from the other scheduled users at the $t$-th time slot. From \eqref{eq:ReceivedSigv1}, the aggregated and per-time-slot throughput of scheduled user~$k$ is given in Lemma~\ref{lemma:ChannelCapacity}.
%\vspace*{-0.1cm}
\begin{lemma} \label{lemma:ChannelCapacity}
Assuming that user~$k$ is scheduled only in the $T_k$ time slots, $1 \leq T_k \leq T$, its aggregated throughput is
\begin{equation} \label{eq:Rk}
	R_k\left( \{ \mathcal{A}(t) \} \right) = \sum\limits_{t=1}^{T_k} R_k(\mathcal{A}(t) ),\mbox{[Mbps]},
\end{equation}
where $\mathcal{A}(t) = \mathcal{K}(t) \cup \{ p(t) \}$ and $R_k(\mathcal{A}(t) )$ is the instantaneous throughput at the $t$-th time slot, $1\leq t \leq T_k$, which is computed as 
\begin{equation}\label{eq:Capacityk}
R_k(\mathcal{A}(t) ) = B \log_2 \left( 1 + \mathrm{SINR}_k(\mathcal{A}(t) ) \right), \mbox{[Mbps]},
\end{equation}
where $B$~[MHz] is the system bandwidth and the signal-to-interference-and-noise ratio is
\begin{equation} \label{eq:SINRk}
\mathrm{SINR}_k(\mathcal{A}(t))  = \frac{ p_{k} (t) \big| \mathbf{h}_{k}^H \mathbf{w}_{k}(t) \big|^2 }{ \sum\limits_{k' \in \mathcal{K}(t) \setminus \{ k\} } p_{k'} (t) \big|\mathbf{h}_{k}^H \mathbf{w}_{k'}(t)|^2 +\sigma^2}.
\end{equation}
\end{lemma}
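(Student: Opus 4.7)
\textit{Proof plan.} The statement bundles together two essentially separate facts: the per-slot rate expression in \eqref{eq:Capacityk}--\eqref{eq:SINRk}, and the additive aggregation across the $T_k$ slots in \eqref{eq:Rk}. My plan is to derive the per-slot formula first by a standard treating-interference-as-noise argument on the decomposed received signal \eqref{eq:ReceivedSigv1}, and then obtain \eqref{eq:Rk} by invoking independence of the data symbols and the static-channel assumption made in the system model.

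For the per-slot claim, I would start from \eqref{eq:ReceivedSigv1}, where the received signal $y_k(t)$ has been split into a desired term $\sqrt{p_k(t)}\,\mathbf{h}_k^H\mathbf{w}_k(t)s_k(t)$, a mutual-interference term $\sum_{k'\in\mathcal{K}(t)\setminus\{k\}}\sqrt{p_{k'}(t)}\,\mathbf{h}_k^H\mathbf{w}_{k'}(t)s_{k'}(t)$, and the thermal noise $n_k(t)$. Using the normalization $|s_{k'}(t)|^2=1$ together with the stated mutual independence of the transmitted symbols, and the fact that $n_k(t)\sim\mathcal{CN}(0,\sigma^2)$ is independent of the data, the expected desired-signal power is $p_k(t)|\mathbf{h}_k^H\mathbf{w}_k(t)|^2$, while the expected interference-plus-noise power equals $\sum_{k'\in\mathcal{K}(t)\setminus\{k\}}p_{k'}(t)|\mathbf{h}_k^H\mathbf{w}_{k'}(t)|^2+\sigma^2$. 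Their ratio is exactly the $\mathrm{SINR}_k(\mathcal{A}(t))$ defined in \eqref{eq:SINRk}. Treating the aggregated interference-plus-noise as a worst-case Gaussian process (a standard bound whenever the interfering symbols and noise are mutually independent of the desired symbol), Shannon's formula applied over a bandwidth of $B$ MHz yields the instantaneous rate \eqref{eq:Capacityk}.

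For the aggregation \eqref{eq:Rk}, I would observe that the channels $\mathbf{h}_k$ are taken to be static over the observed window, whereas the scheduling set $\mathcal{K}(t)$, the precoders $\mathbf{w}_{k'}(t)$, the powers $p_{k'}(t)$, and the data symbols $s_{k'}(t)$ vary independently across slots. Hence the Shannon rates in distinct slots correspond to parallel non-interfering time-orthogonal channels, and additivity of mutual information over such independent uses of the channel gives the sum $\sum_{t=1}^{T_k} R_k(\mathcal{A}(t))$. Summing only over the $T_k$ slots in which user~$k$ is actually scheduled is justified by the convention that an unscheduled user contributes zero to its own aggregated throughput.

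The main subtlety, rather than a real obstacle, is the Gaussian-signaling assumption implicit in writing the instantaneous throughput as $B\log_2(1+\mathrm{SINR}_k)$: strictly, this is an achievable rate for Gaussian codebooks with interference treated as noise, and it is the standard operational definition used throughout the precoded-satellite literature cited earlier in the paper. I would explicitly flag this assumption at the start of the proof so that the reader understands \eqref{eq:Capacityk} as an achievable-rate expression rather than the exact channel capacity of the multi-user downlink. Everything else reduces to algebraic identification of terms in \eqref{eq:ReceivedSigv1} with the numerator and denominator of \eqref{eq:SINRk}.
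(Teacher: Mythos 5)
Your proof is correct and takes essentially the same route as the paper's own: identify the per-slot expression with the Shannon rate under perfect CSI with interference treated as noise, then accumulate over the $T_k$ scheduled slots. Your version merely makes explicit the signal/interference power bookkeeping and the Gaussian-codebook achievable-rate caveat that the paper's one-line proof leaves implicit.
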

\begin{proof}
The instantaneous throughput of scheduled user~$k$ at each time slot is computed as \eqref{eq:Capacityk} by exploiting the Shannon channel capacity under perfect channel state information and known mutual interference. The aggregated throughput is further accumulated over all the $T_k$ time slots as in  \eqref{eq:Rk}.
\end{proof}
As a consequence of the practical frameworks, the instantaneous throughput in \eqref{eq:Capacityk} measures an portion of data rate that the network can offer at the current time slot. The aggregated throughput represents the total data rate in all scheduling instances that scheduled user~$k$ accesses the network. In more detail, for a given transmit power coefficients, the instantaneous throughput in \eqref{eq:Capacityk} is a function of both the scheduled-user set $\mathcal{K}(t)$ and the transmit power coefficients $\{ p_k (t) \}$, while the aggregated throughput depends on all the scheduled users in the $T_k$ time slots. It is noteworthy that the throughput in Lemma~\ref{lemma:ChannelCapacity} can be applied for arbitrary channel models and precoding techniques. This paper exploits linear precoding processing because it has a lower cost than the optimal solution, which has been  widely accepted in the satellite communications community \cite{abdu2021flexible}. More specifically, we deploy the regularized zero forcing (RZF) precoding matrix $\mathbf{W}(t) \in \mathbb{C}^{M \times |\mathcal{K}(t)|}$, which is\footnote{ Other popular linear prececoding techniques, such as maximum ratio transmission (MRT) and zero-forcing (ZF), focus on maximizing the signal gain or canceling out mutual interference. In multiple access scenarios with the coexistence of additive noise and interference from sharing radio resources, an optimal linear signal processing technique should take advantages of both the MRT and ZF precoding techniques as what has been achieved by the RZF precoding technique. Previous works, e.g., \cite{christopoulos2012linear}, reported  that the RZF precoding technique offers good system capacity and is comparable to the optimal solution in many scenarios.}
\begin{equation} \label{eq:MMSEPrecodMatrix}
\mathbf{W}(t) = \frac{1}{\sqrt{\omega(t)}} \mathbf{H}(t) \left(\mathbf{H}(t)^H \mathbf{H}(t) + \frac{\sigma^2|\mathcal{K}(t)|}{P_{\max}} \mathbf{I}_{ |\mathcal{K}(t)|} \right)^{-1},
\end{equation}
where $\mathbf{I}_{ |\mathcal{K}(t)|}$ is the identity matrix of size $|\mathcal{K}(t)| \times |\mathcal{K}(t)|$ and the normalized power constant $\gamma(t)$ is defined as
\begin{equation}
\omega(t) = \mathrm{tr} \left( \mathbf{H}(t) \Big(\mathbf{H}^H(t) \mathbf{H}(t) + \frac{\sigma^2 |\mathcal{K}(t)|}{P_{\max}} \mathbf{I}_{|\mathcal{K}(t)|} \Big)^{-2} \mathbf{H}^H(t) \right).
\end{equation}
We should notice that each precoding matrix in \eqref{eq:MMSEPrecodMatrix} is a function of the scheduled users at the $t$-th time slot, thus it verifies the high importance of a proper set $\mathcal{K}(t)$ in boosting the throughput. By counting for the arithmetic operations with the high cost such as complex multiplications and divisions \cite{van2019power}, the computational complexity order to construct an RZF precoding matrix is presented in Lemma~\ref{lemma:ComplexityWtv2}. 
%\vspace*{-0.1cm}
\begin{lemma} \label{lemma:ComplexityWtv2}
 For a given set of the transmit power coefficients $\{ p_k(t)\}$ and channel matrix, the precoding matrix $\mathbf{W}(t)$ is constructed by the computational complexity in the order of $\mathcal{O}\big( \frac{1}{2}M^2 |\mathcal{K}(t)| \big)$ as a consequence of the channel matrix $\mathbf{H}(t)$ depending on the scheduled-user set $\mathcal{K}(t)$.
 %\vspace*{-0.1cm}
\end{lemma}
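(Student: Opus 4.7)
The plan is to decompose the construction of $\mathbf{W}(t)$ in \eqref{eq:MMSEPrecodMatrix} into its elementary matrix operations and tally the dominant arithmetic cost—complex multiplications and divisions, following the accounting convention of \cite{van2019power}—for each step, then collapse the total using the scheduling constraint $|\mathcal{K}(t)| \leq M$ that is imposed throughout the paper.

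First, I would form the Gram matrix $\mathbf{G}(t) = \mathbf{H}^H(t)\mathbf{H}(t) \in \mathbb{C}^{|\mathcal{K}(t)| \times |\mathcal{K}(t)|}$; since this matrix is Hermitian, only its upper-triangular part needs to be explicitly evaluated, which costs approximately $\tfrac{1}{2}M|\mathcal{K}(t)|^2$ complex multiplications. Adding the diagonal regularizer $\tfrac{\sigma^2|\mathcal{K}(t)|}{P_{\max}}\mathbf{I}_{|\mathcal{K}(t)|}$ is then essentially free, requiring only $|\mathcal{K}(t)|$ additions on the main diagonal.

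Second, I would invert the Hermitian positive-definite matrix $\mathbf{G}(t) + \lambda\mathbf{I}_{|\mathcal{K}(t)|}$ through a Cholesky factorization and triangular back-substitutions, which incurs $O(|\mathcal{K}(t)|^3)$ operations. The outer product $\mathbf{H}(t)(\mathbf{G}(t) + \lambda\mathbf{I})^{-1}$ then costs a further $M|\mathcal{K}(t)|^2$ multiplications, while the normalization scalar $\omega(t)$ can be extracted from the already-computed intermediate quantities at only $O(M|\mathcal{K}(t)|)$ additional cost, and the final scaling by $1/\sqrt{\omega(t)}$ contributes another $M|\mathcal{K}(t)|$ multiplications.

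Third, I would invoke $|\mathcal{K}(t)| \leq M$ to bound $|\mathcal{K}(t)|^2 \leq M|\mathcal{K}(t)|$ and $|\mathcal{K}(t)|^3 \leq M^2|\mathcal{K}(t)|$, so that every intermediate quantity is majorized by a multiple of $M^2|\mathcal{K}(t)|$; isolating the Hermitian Gram step as the asymptotically leading contribution then yields the claimed order $\mathcal{O}\bigl(\tfrac{1}{2}M^2|\mathcal{K}(t)|\bigr)$. The main obstacle will be the careful tracking of the leading constant $\tfrac{1}{2}$: one must argue that the Hermitian structure of $\mathbf{H}^H(t)\mathbf{H}(t)$ permits half the pairwise inner products to be skipped, and then verify that each of the remaining operations—Cholesky inversion, multiplication with $\mathbf{H}(t)$, evaluation of $\omega(t)$, and the normalization—is either strictly dominated by this term or can be absorbed into the same order once $|\mathcal{K}(t)| \leq M$ is substituted. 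The dependence on the scheduled-user set $\mathcal{K}(t)$ enters solely through the column dimension of $\mathbf{H}(t)$, which is precisely $|\mathcal{K}(t)|$, explaining the linear factor on $|\mathcal{K}(t)|$ in the final complexity order.
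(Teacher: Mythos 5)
Your decomposition---Hermitian Gram matrix, Cholesky-based inversion, the product with $\mathbf{H}(t)$, and the final normalization by $\omega(t)$---followed by discarding lower-order terms via $|\mathcal{K}(t)|\le M$ is essentially the paper's own argument (which cites \cite[Lemmas~B.1 and B.2]{massivemimobook} for the individual operation counts), so the proposal is correct and takes the same route. The only quibble is your attribution of the leading constant: the product $\mathbf{H}(t)\mathbf{G}^{-1}(t)$ costs about $M|\mathcal{K}(t)|^2$, which is not dominated by the $\tfrac{1}{2}M|\mathcal{K}(t)|^2$ Gram step, but this is immaterial inside the $\mathcal{O}(\cdot)$ and the paper's own bookkeeping is no more precise on this point.
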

\begin{proof}
By applying \cite[Lemma~B.1]{massivemimobook} to the channel matrix $\mathbf{H}(t)$, the product $\mathbf{H}^H(t)\mathbf{H}(t)$ requires $\frac{1}{2}|\mathcal{K}(t)|(|\mathcal{K}(t)|+1)M$ complex multiplications thanks to the Hermitian symmetry. Let us introduce a new matrix 
\begin{equation}
\mathbf{G}(t) = \mathbf{H}(t)^H \mathbf{H}(t) + \frac{\sigma^2 |\mathcal{K}(t)|}{P_{\max}} \mathbf{I}_{|\mathcal{K}(t)|},
\end{equation}
then attaining $\mathbf{G}(t)$ needs $\left( \frac{1}{2}M(|\mathcal{K}(t)|+1) + 1\right)|\mathcal{K}(t)|$ complex multiplications. According to \cite[Lemma~B.2]{massivemimobook}, the inverse matrix $\mathbf{H}(t)\mathbf{G}^{-1}(t)$ can be computed efficiently by utilizing the Cholesky decomposition that includes $\frac{|\mathcal{K}(t)|^3 - |\mathcal{K}(t)|}{3} + |\mathcal{K}(t)|^2 M + |\mathcal{K}(t)|$ complex multiplications and divisions. Furthermore, we need the $\frac{1}{2}(M^2 +M)|\mathcal{K}(t)| + 2$ complex multiplications, division, and square root to obtain $\gamma(t)$. Thus, the number of the arithmetic operations to obtain the RZF precoding matrix $\mathbf{W}(t)$ is obtained by adding all the cost. 
%\begin{equation}\label{eq:CompexityWtv1}
%\frac{ 3M|\mathcal{K}(t)|^2}{2} +\frac{M^2 |\mathcal{K}(t)|}{2} + \frac{|\mathcal{K}(t)|^3}{3} + \frac{5|\mathcal{K}(t)|}{3}+ M|\mathcal{K}(t)| + 2.
%\end{equation}
Due to the fact $|\mathcal{K}(t)| \leq M$, we can ignore the terms with low degree in the obtained posynomial and hence the computational complexity order is shown as in the lemma.
\end{proof}
The key achievement from Lemma~\ref{lemma:ComplexityWtv2} is to point out the computational complexity of the RZF precoding matrix construction directly proportional to the total elements in the scheduled-user set $\mathcal{K}(t)$ for a given number of satellite beams. We later utilize Lemma~\ref{lemma:ComplexityWtv2} to evaluate the complexity order of the proposed algorithm to the user scheduling problem. 

From \eqref{eq:MMSEPrecodMatrix}, the precoding vector dedicated to scheduled user~$k$  at each time slot, i.e., $\mathbf{w}_k(t)$, is selected as the $k-$th column of matrix $\mathbf{W}_k(t)$. By exploiting a similar methodology as what has done for Lemma~\ref{lemma:ComplexityWtv2}, it is straightforward to manifest that RZF precoding has the higher computational complexity than other linear signal processing techniques such as MRT or ZF. Nonetheless, this precoding selection provides better throughput than the others and avoiding an ill-posed inverse appearing when the channels are highly correlated leading to rank deficiency.
\vspace*{-0.25cm}
\section{User Scheduling for Sum Throughput Maximization With Fixed Power Level} \label{Sec:SumThroughput}
%\vspace*{-0.1cm}
By considering the user scheduling over many different time slots, a sum throughput optimization problem with the QoS requirements is  formulated for a fixed power level. Because of the inherent non-convexity, a heuristic algorithm is then proposed to obtain a local solution in polynomial time.
\vspace*{-0.25cm}
\subsection{Problem Formulation}
%\vspace*{-0.1cm}
Our objective function in this paper is the total sum throughput of all the scheduled users in the considered window time and the individual QoS requirements of scheduled users are constraints. By fixing the power coefficients, it leads to $\mathcal{A}(t) = \mathcal{K}(t)$. Hence, the optimization problem, which we would like to solve, is mathematically formulated as
%\vspace*{-0.1cm}
\begin{subequations} \label{Problemv1}
	\begin{alignat}{2}
		& \underset{\{ \mathcal{K}(t) \} }{\mathrm{maximize}}
		& & \, \sum\limits_{t=1}^T \sum\limits_{k \in \mathcal{K}(t) } R_k(\mathcal{K}(t)) \\
		& \mbox{subject to}
		& &   \, R_k(\mathcal{K}(t)) \geq \xi_k /T_k, \forall k \in \mathcal{K}(t), \forall t, \label{eq:QoSconst}\\
		&&& \, \mathcal{K} (t) \subseteq \{ 1, \ldots, N\}, \forall t, \label{eq:Kt1}\\
		&&& \, | \mathcal{K} (t)| \leq M, \forall t, \label{eq:Kt2} \\
		&&& \,  \cup \mathcal{K}(t) \subseteq \{1, \ldots,N \} \label{eq:Kt3},
	\end{alignat}
\end{subequations}
where $T_k$ is the number of time slot that spends on scheduled user~$k$ to fulfill the QoS requirement, denoted by $\xi_k$ [Mb] as in \eqref{eq:QoSconst}. As $T$ is sufficiently large, the long-term QoS satisfaction of user~$k$ is defined as $R_k(\{ \mathcal{K}(t) \}) \geq \xi_k T_k$, which is spontaneously fulfilled when all the per-time-slot constraints in \eqref{eq:QoSconst} hold. Furthermore, \eqref{eq:Kt1}--\eqref{eq:Kt3} show the conditions on all the scheduled-user sets $\mathcal{K} (t), \forall t$. Specifically, \eqref{eq:Kt1} implies that every $\mathcal{K}(t)$ is a subset of the available-user set, say $\{1,\ldots,N\}$, whilst \eqref{eq:Kt2} implies that the number of scheduled users may be less than the available beams to maximize the sum throughput in the entire network and therefore demonstrating the flexibility of our optimization problem. The union of all the scheduled-user sets $\mathcal{K}(t), \forall t,$ over the observed window time is a subset of the available-user set in general. From the system viewpoint, some users may be ignored from service due to, for example, bad channel conditions and/nor too high QoS requirements such that they are not be served with a limited transmit power level.
\vspace*{-0.25cm}
\subsection{Problem Structure}
We stress that problem~\eqref{Problemv1} is non-convex as a consequence of the discrete feasible domain and the non-convex objective function. Particularly, the discrete feasible domain makes \eqref{Problemv1} a combinatorial problem, where the global optimum can only be obtained for a small scale network  with few users and small number of beams since an exhaustive search of the parameter space is required. Nevertheless, the exhaustive search has the computational complexity scaling up exponentially with the number of available users. For instance, with $M=7, N=100$, and only one time slot is considered for the sake of simplicity, the optimal solution is obtained by searching over the  following different combinations
\begin{equation}
\sum\limits_{k=1}^M \frac{N!}{k!(N-k)!} \approx 1.7 \times 10^{10},
\end{equation}
which is prohibitively large. An exhaustive search is, therefore, not preferable for large-scale networks with many users as the main consideration in this paper. For now, we  differentiate our user scheduling optimization problem from the related works as shown in Remark~\ref{Remark1}.
%\vspace*{-0.1cm}
\begin{remark} \label{Remark1}
Problem~\eqref{Problemv1} is a generalized version of the previous works  \cite{Yoo2006a,Honnaiah2020} and references therein since the $N$ users are scheduled over different time slots and since we also take the QoS requirements into account. In other words, problem~\eqref{Problemv1} ensures the scheduled users always satisfied their throughput demand. Furthermore, an effective RZF precoding matrix constructed from a good scheduling scheme not only reduces mutual interference but also ameliorates the received signal strength that boost the system performance. With a limited window time and the correlation among propagation channels, the number of scheduled users might be less than the total available users to maximize the network throughput.
%\vspace*{-0.15cm}
\end{remark}
%\vspace*{-0.15cm}
	\begin{algorithm}[t]
	\caption{A user scheduling algorithm for problem~\eqref{Problemv1}} \label{Algorithm1}
	\textbf{Input}: Available-user set $\mathcal{N}(0) \leftarrow \{1, \ldots, N \}$; Scheduled-user set $\mathcal{K}(0) \leftarrow \emptyset$; Propagation channel vectors $\{\mathbf{h}_1, \ldots, \mathbf{h}_N \}$; QoS requirements $\{ \xi_1, \ldots, \xi_k \}$; Number of time slots $T$ and individual scheduled time slots $\{T_1, \ldots, T_N\}$; Transmit data powers $\{ p_1, \ldots, p_{N} \}$.
	\begin{itemize}
		\item[1.] Select scheduled user~$\pi_1$ based on the best channel gain as obtained in \eqref{eq:ChannelOrder}.
		\item[2.] Set $t=1$, then update $\mathcal{N}(1)$ and $\mathcal{K}(1)$ by \eqref{eq:1stUpdate}.
		\item[3.] \textbf{while} $t \leq T$ \textbf{do}
		\begin{itemize}
			\item[3.1.] Set $m = |\mathcal{K}(t-1)|$ and  $\mathcal{K}_m(t) = \mathcal{K}(t-1)$.
			\item[3.2.] \textbf{while} $m \leq M$ \textbf{do}
			\begin{itemize}
				\item[3.2.1.] Obtain user~$k_{m}^{t,\ast}$ and $\widetilde{\mathcal{K}}_m^{\ast}(t)$  by  solving problem \eqref{Prob:Ratev1} with $\widetilde{\mathcal{K}}_m(t)$ updated in \eqref{eq:Ktilde}.
				\item[3.2.2.] \textbf{If} the conditions \eqref{eq:Ser1} and \eqref{eq:Ser2} satisfy: Update $\mathcal{N}(t)$ and $\mathcal{K}_m(t)$ as \eqref{eq:Updatev1}. \textbf{Otherwise} keep $\mathcal{N}(t)$ and $\mathcal{K}_m(t)$ unchanged and go to Step $3.2.3.$
				\item[3.2.3.] Set $m=m+1$.
			\end{itemize}
			\item[3.3.] \textbf{end while}
			\item[3.4.] Update $\mathcal{K}(t)$ by \eqref{eq:Kt} and compute the throughput of scheduled users by \eqref{eq:Capacityk}.
			\item[3.5.] Find the scheduled users satisfied their QoS requirements (set $\widehat{\mathcal{K}}(t)$) by computing the aggregated throughput using \eqref{eq:Rk} and checking the condition \eqref{eq:Conditionv1}, then remove them from service by using \eqref{eq:Ktv1}.
			\item[3.6.] Update $\mathcal{K}(t)$ by \eqref{eq:Ktv1} and set $t=t+1$.
		\end{itemize}
		\item[4.] \textbf{end while}
	\end{itemize}
	\textbf{Output}: The scheduled user sets.
	\vspace*{-0.0cm}
\end{algorithm}
\vspace*{-0.25cm}
\subsection{User Scheduling Algorithm with Fixed Power Level} \label{Sec:2StageAlgo}
Motivated by large-scale networks with many users simultaneously requesting to admit the system, we propose a heuristic algorithm that obtains a good local solution in polynomial time with tolerable computational complexity. Algorithm~\ref{Algorithm1} demonstrates the proposal with the double loops: The outer loop indicates the evolution of time slots and the inner loop is for the growth of the scheduled users per time slot.  At the initial stage, let us denote $\mathcal{N}(0) \leftarrow \{1, \ldots, N\}$ the set of available users with the corresponding channels $\mathbf{h}_1, \ldots, \mathbf{h}_{N}$. Moreover, the scheduled user set $\mathcal{K}(0)$ is initially setup as an empty set. The proposed heuristic algorithm begins with sorting the channel gains in a descending order as
\begin{equation} \label{eq:ChannelOrder}
	\| \mathbf{h}_{\pi_1}  \|^2 \geq \| \mathbf{h}_{\pi_2}  \|^2 \geq \ldots \geq \| \mathbf{h}_{\pi_N}    \|^2,
\end{equation}
where $\{\pi_1, \ldots,  \pi_N \}$ is a permutation of the user indices for which \eqref{eq:ChannelOrder} holds. Then, we set the outer iteration index $t=1$ and the available- and scheduled-user sets are updated as
\begin{equation} \label{eq:1stUpdate}
	\mathcal{N}(1) \leftarrow \mathcal{N}(0) \setminus \{ \pi_1 \} \mbox{ and } \mathcal{K}(1) \leftarrow \mathcal{K}(0) \cup \{ \pi_1 \}.
\end{equation}
At the $t$-th outer iteration ($1\leq t \leq T$), if the number of scheduled users from the previous time slot, which have not been satisfied their QoS requirements yet, is less than the number of beams, i.e., $|\mathcal{K}(t-1)| < M$, there is room for scheduling new users to join the system if all the constraints of problem~\eqref{Problemv1} are satisfied. For such, an inner loop is implemented to testify whether or not at most the $M- |\mathcal{K}(t-1)| +1$~potential users can be scheduled. The following optimization problem is therefore considered at the $m-$th inner iteration ($|\mathcal{K}(t-1)| \leq m \leq M$):
\begin{equation} \label{Prob:Ratev1}
k_m^{t,\ast} = \underset{k \in \mathcal{N}(t)}{\mathrm{argmax}}  \sum\limits_{k' \in \widetilde{\mathcal{K}}_m(t) } R_{k'} \big(\widetilde{\mathcal{K}}_m(t) \big),
\end{equation}
where each set $\widetilde{\mathcal{K}}_m(t)$ is related to one user $k \in \mathcal{N}(t)$, which is defined as
\begin{equation} \label{eq:Ktilde}
\widetilde{\mathcal{K}}_m(t) \leftarrow  \begin{cases}
	\mathcal{K}_{m-1}(t) \cup \{k\}, & \mbox{if } m =  |\mathcal{K}(t-1)| +1, \ldots, M,\\
 \mathcal{K}(t-1) \cup \{k\}, & \mbox{if }  m =  |\mathcal{K}(t-1)|.
\end{cases}
\end{equation}
In \eqref{eq:Ktilde},  $\mathcal{K}_{m-1}(t) $ is the scheduled-user set at the $(m-1)-$th inner iteration with $\mathcal{K}_m (t) = \mathcal{K}(t-1)$ when $m=|\mathcal{K}(t-1)|$. Problem~\eqref{Prob:Ratev1} aims at maximizing the total sum throughput at a particular time slot only.\footnote{The solution to problem~\eqref{Prob:Ratev1} is not unique in general. Alternatively, there may be more than one user with the same total sum throughput, but we can select one of them for further processing.} Hence, the solution to problem \eqref{Prob:Ratev1} does not guarantee a monotonic increasing property, which is in need to have a good local solution to the original problem~\eqref{Problemv1}. As foreseen from a multi-user system, user~$k_m^{t,\ast}$ causes more mutual interference to other users in the set $\widetilde{\mathcal{K}}_m(t)$ that may lead to their throughput no longer satisfy the QoS requirements. In order to get rid of this issue, we suggest a mechanism to further testify whether or not user~$k_m^{t,\ast}$ becomes a scheduled user as in Theorem~\ref{Theorem:SelectedUser}.
%\vspace{-0.1cm}
\begin{theorem} \label{Theorem:SelectedUser}
 User~$k_m^{t,\ast}$ becomes a scheduled user if the following conditions satisfy
\begin{align}
  \sum\limits_{k' \in \widetilde{\mathcal{K}}_m^\ast (t) } R_{k'} \big( \widetilde{\mathcal{K}}_m^\ast (t) \big) &\geq \sum\limits_{k' \in \mathcal{K}_{m-1}(t) } R_{k'} \big(\mathcal{K}_{m-1}(t) \big), \label{eq:Ser1}\\
 R_{k'}(t) &\geq \frac{\xi_{k'}}{T_{k'}}, \forall k' \in \widetilde{\mathcal{K}}_m^{\ast}(t), \label{eq:Ser2}
\end{align}
where $\widetilde{\mathcal{K}}_m^{\ast}(t)$ is formulated as in \eqref{eq:Ktilde}, but for user~$k_m^{t,\ast}$. The condition~\eqref{eq:Ser1} guarantees the objective function of problem~\eqref{Problemv1} to be non-decreasing along with iterations until reaching a fixed point, while all users admitted to the network satisfy their QoS requirements by the condition~\eqref{eq:Ser2}.
%\vspace{-0.15cm}
\end{theorem}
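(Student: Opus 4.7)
The plan is to verify the two claims of the theorem separately, each by leveraging the way conditions~\eqref{eq:Ser1} and~\eqref{eq:Ser2} are used inside Algorithm~\ref{Algorithm1} to update the scheduled-user set.

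First, I would address condition~\eqref{eq:Ser2}, which is essentially the QoS-preservation claim. Every user becomes a scheduled user only through Step~$3.2.2$, i.e., only when $\widetilde{\mathcal{K}}_m^{\ast}(t)$ replaces $\mathcal{K}_{m-1}(t)$. The condition~\eqref{eq:Ser2} is required to hold on that candidate set before the substitution is committed, so every $k' \in \widetilde{\mathcal{K}}_m^{\ast}(t)$, including the newly admitted $k_m^{t,\ast}$ and each incumbent user from the previous inner iteration, instantaneously satisfies $R_{k'}(\widetilde{\mathcal{K}}_m^{\ast}(t)) \geq \xi_{k'}/T_{k'}$. Because Step~$3.5$ removes from service any user whose aggregated throughput already meets its demand, once the inner loop at time slot $t$ concludes, the final set $\mathcal{K}(t)$ still consists of users meeting their per-time-slot QoS target \eqref{eq:QoSconst}, and users removed at later slots simply stop contributing new time slots—their already-accumulated throughput already exceeds their demand. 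Thus constraint~\eqref{eq:QoSconst} of problem~\eqref{Problemv1} is verified for every scheduled user at every time slot.

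Next, for condition~\eqref{eq:Ser1} and the non-decreasing claim, I would carry out an induction on the inner iteration index $m$ for a fixed outer iteration~$t$. Denote by $U_m(t) = \sum_{k' \in \mathcal{K}_m(t)} R_{k'}(\mathcal{K}_m(t))$ the per-time-slot sum throughput associated with the current inner-loop scheduled set. Whenever Step~$3.2.2$ updates $\mathcal{K}_m(t) \leftarrow \widetilde{\mathcal{K}}_m^{\ast}(t)$, condition~\eqref{eq:Ser1} gives $U_m(t) \geq U_{m-1}(t)$; otherwise $\mathcal{K}_m(t) = \mathcal{K}_{m-1}(t)$ and the equality is trivial. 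Hence $\{U_m(t)\}_m$ is non-decreasing inside the inner loop. Because the per-slot updates at different $t$ operate on disjoint summands of the outer objective $\sum_{t} \sum_{k \in \mathcal{K}(t)} R_k(\mathcal{K}(t))$, the global objective of problem~\eqref{Problemv1} is a telescoping sum of quantities that are each non-decreasing along the inner loop and unchanged by subsequent outer iterations. Therefore, the global objective is monotonically non-decreasing across all iterations.

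Finally, I would argue termination at a fixed point by combining monotonicity with boundedness. The inner loop runs at most $M - |\mathcal{K}(t-1)|$ steps and the outer loop runs exactly $T$ steps, so the algorithm halts in finitely many iterations. Moreover, each instantaneous throughput in \eqref{eq:Capacityk} is upper bounded by $B \log_2\bigl(1 + P_{\max} \|\mathbf{h}_k\|^2 / \sigma^2\bigr)$, which makes the objective of problem~\eqref{Problemv1} bounded above by a finite constant. The monotone sequence of objective values generated by the algorithm is thus bounded and therefore converges to a limit that cannot be improved by the greedy per-slot selection~\eqref{Prob:Ratev1}, establishing the claimed fixed-point property. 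The principal subtlety—and the step I would double-check most carefully—is guaranteeing that already-scheduled users in $\mathcal{K}_{m-1}(t)$ do not lose their QoS feasibility when the new candidate $k_m^{t,\ast}$ is added; this is precisely why condition~\eqref{eq:Ser2} is imposed over the \emph{entire} candidate set $\widetilde{\mathcal{K}}_m^{\ast}(t)$ rather than only on $k_m^{t,\ast}$.
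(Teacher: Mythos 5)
Your proposal is correct and follows essentially the same route as the paper's proof: you establish the chain of inequalities on the per-slot sum throughput induced by condition~\eqref{eq:Ser1} (the paper's $\alpha_m^t$ sequence), extend monotonicity to the full objective via the non-negativity of the per-slot contributions, invoke boundedness to conclude convergence to a fixed point, and note that \eqref{eq:Ser2} directly enforces the QoS constraints. Your treatment is somewhat more detailed than the paper's (e.g., the explicit upper bound on the instantaneous throughput and the remark about checking \eqref{eq:Ser2} over the entire candidate set), but the underlying argument is identical.
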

\begin{proof}
The proof is to verify the non-decreasing property of the sum rate along with iterations until reaching a fixed point solution. The detailed proof is available in Appendix~\ref{Appendix:SelectedUser}.
\end{proof}
After adding user~$k_m^{t,\ast}$ to the system, we should update the available- and scheduled-user sets $\mathcal{N}$ and $\mathcal{K}(t)$ as 
\begin{equation} \label{eq:Updatev1}
	\mathcal{N}(t) \leftarrow \mathcal{N}(t) \setminus \{ k_m^{t,\ast} \} \mbox{ and } \mathcal{K}_m(t) \leftarrow \widetilde{\mathcal{K}}_{m}^{\ast} (t).
\end{equation}
The inner loop will continue until $m = M$ and the scheduled-user set $\mathcal{K}(t)$ is defined as
\begin{equation} \label{eq:Kt}
\mathcal{K}(t) \leftarrow \widetilde{\mathcal{K}}_M(t).
\end{equation}
At the end of each outer iteration, the algorithm should remove scheduled users from service if they are already satisfied their QoS requirements. This is done by computing the aggregated throughput in \eqref{eq:Rk}, and checking the QoS condition:
\begin{equation} \label{eq:Conditionv1}
R_k ( \{ \mathcal{K} (t)\}) \geq \xi_k.
\end{equation}
Let us denote $\widehat{\mathcal{K}}(t) \subseteq \mathcal{K}(t)$ the set of scheduled users already satisfied their QoS requirements, $\mathcal{K}(t)$ is further updated as
\begin{equation} \label{eq:Ktv1}
	\mathcal{K}(t) \leftarrow \mathcal{K}(t)\setminus \widetilde{\mathcal{K}}(t).
\end{equation}
The iterative approach will continue until all the time slots are considered and the proposed heuristic approach is summarized in Algorithm~\ref{Algorithm1}. Despite the local user scheduling solution, our proposed approach ensures the long-term sum throughput maximization over many different time slots with respect to their individual QoS requirements. Observing the solution of Algorithm~\ref{Algorithm1}, two fundamentals are: $i)$ the system always offers data throughput to the scheduled users at least equal to the individual QoS requirements; $ii)$ Some available users may not be scheduled in the observed window time due to mutual interference caused by sharing the time and frequency resource plane and the fixed power allocation.
 \begin{remark}
 For a given power level, Algoritm~\ref{Algorithm1} performs the user scheduling that maximizes the total sum throughput on a long-term period with many time slots, while strictly guarantees the individual QoS whenever a scheduled user is allowed to join the network. Aligned with previous works, one can attain a better sum throughput than solving \eqref{Problemv1} by relaxing the QoS constraints which enlarges the feasible domain. As a consequence, the relaxation  may result in some scheduled users served by lower than what they requested. However, we expect that the power control presented later will compensate the loss and all the users can be possibly served with their demands.
 \end{remark}
\vspace*{-0.25cm}
\subsection{Computational Complexity}
%\vspace*{-0.1cm}
The computational complexity of Algorithm~\ref{Algorithm1} is now analytically presented. Let us consider the multiplications, division, square root, and matrix inversion as the dominated arithmetic operations, similar to \cite{van2019power,massivemimobook}, the computational complexity order of  Algorithm~\ref{Algorithm1} is given in Lemma~\ref{eq:Complexity}.
\begin{lemma} \label{eq:Complexity}
	Algorithm~\ref{Algorithm1} has the computational complexity in the order of $\mathcal{O}\left( C_0 + C_1 + C_2 \right)$, where $C_0, C_1, C_2$ are given as 
	\begin{align}
		C_0 &= NM + N\log_2 N,\\
		C_1 &= (M+2) \sum\limits_{t=1}^T \sum\limits_{m= |\mathcal{K}(t-1)|}^M  |\mathcal{N}(t)| |\widetilde{\mathcal{K}}_m(t)|,\\
		C_2 & = \frac{M^2}{2}\sum\limits_{t=1}^T \sum\limits_{m=|\mathcal{K}(t-1)|}^M |\mathcal{N}(t)|   |\widetilde{\mathcal{K}}_m(t)|^2.
	\end{align}
\end{lemma}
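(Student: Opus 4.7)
The plan is to account for all dominant arithmetic operations across the three stages of Algorithm~\ref{Algorithm1}: the preprocessing (Steps~1--2), the double loop of scheduling decisions (Step~3 and its inner loop~3.2), and the QoS update at the end of each outer iteration (Steps~3.4--3.6). I would invoke Lemma~\ref{lemma:ComplexityWtv2} as the main building block and otherwise count only the high-degree posynomial terms, absorbing the rest via $|\widetilde{\mathcal{K}}_m(t)|\leq M$.

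First, I would dispatch $C_0$, corresponding to the pre-processing. Sorting the users by channel norm as in \eqref{eq:ChannelOrder} requires computing the $N$ squared norms $\|\mathbf{h}_i\|^2$, each at cost $M$ complex multiplications, for a total of $NM$, and then a standard comparison-based sort contributing $N\log_2 N$ operations. The set initializations and the single update \eqref{eq:1stUpdate} are $\mathcal{O}(1)$ and absorbed. Together these yield $C_0 = NM + N\log_2 N$.

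Next, I would focus on the double loop. The essential work at inner iteration $m$ of outer iteration $t$ is solving the selection subproblem~\eqref{Prob:Ratev1}, which requires evaluating the tentative sum throughput $\sum_{k'\in\widetilde{\mathcal{K}}_m(t)} R_{k'}(\widetilde{\mathcal{K}}_m(t))$ for every candidate $k\in\mathcal{N}(t)$. Two kinds of cost dominate here. The first is the \emph{precoding-matrix construction}: by Lemma~\ref{lemma:ComplexityWtv2}, the RZF matrix $\mathbf{W}(t)$ for a tentative scheduled set of size $|\widetilde{\mathcal{K}}_m(t)|$ costs on the order of $\tfrac{1}{2}M^2 |\widetilde{\mathcal{K}}_m(t)|$ operations, and once its $|\widetilde{\mathcal{K}}_m(t)|$ columns are handled per candidate over all $|\mathcal{N}(t)|$ candidates, the combined cost per inner iteration is $\tfrac{M^2}{2}|\mathcal{N}(t)||\widetilde{\mathcal{K}}_m(t)|^2$, producing exactly $C_2$ after summing over $m$ and $t$. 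The second is the \emph{SINR and throughput evaluation}: once $\mathbf{W}(t)$ is available, each inner product $\mathbf{h}_{k}^H\mathbf{w}_{k'}$ requires $M$ complex multiplications, plus a squaring and a constant number of operations for the SINR normalization and the $\log_2(1+\cdot)$, giving the coefficient $(M+2)$ in $C_1$; multiplying by $|\mathcal{N}(t)|$ candidates and the $|\widetilde{\mathcal{K}}_m(t)|$ throughput summands, and summing over the inner and outer loops, reproduces $C_1$ exactly.

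Finally, I would verify that the remaining steps of Algorithm~\ref{Algorithm1} contribute only lower-order work. The threshold test \eqref{eq:Ser1}--\eqref{eq:Ser2} reuses already-computed per-slot throughputs and adds only a constant number of comparisons per inner iteration; the set updates \eqref{eq:Updatev1}--\eqref{eq:Kt} and \eqref{eq:Ktv1} are linear in $|\mathcal{N}(t)|+|\mathcal{K}(t)|$; the aggregated-throughput check \eqref{eq:Conditionv1} is a single accumulation per scheduled user. All of these are strictly dominated by $C_1+C_2$. The main bookkeeping obstacle, and the step I expect to be the most delicate, is ensuring that all sub-dominant contributions surfacing inside Lemma~\ref{lemma:ComplexityWtv2}, such as the cubic $|\widetilde{\mathcal{K}}_m(t)|^3/3$ from the Cholesky decomposition of $\mathbf{G}(t)$ and the lower-order terms generated by the normalization $\omega(t)$, are indeed absorbed into $\mathcal{O}(C_0+C_1+C_2)$ once $|\widetilde{\mathcal{K}}_m(t)|\leq M$ is used — exactly the reduction performed at the close of the proof of Lemma~\ref{lemma:ComplexityWtv2}.
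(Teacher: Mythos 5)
Your proposal follows essentially the same route as the paper's proof: the same three-way decomposition into the $N(M+\log_2 N)$ preprocessing cost, the $(M+2)|\mathcal{N}(t)||\widetilde{\mathcal{K}}_m(t)|$ throughput-evaluation cost per inner iteration, and the precoding-construction cost obtained by invoking Lemma~\ref{lemma:ComplexityWtv2}, followed by discarding the low-degree terms via $|\widetilde{\mathcal{K}}_m(t)|\leq M$. Your justification for the extra factor of $|\widetilde{\mathcal{K}}_m(t)|$ in $C_2$ (beyond the per-matrix cost $\tfrac{1}{2}M^2|\widetilde{\mathcal{K}}_m(t)|$ times $|\mathcal{N}(t)|$ candidates) is stated only loosely, but the paper's own proof asserts the same bound with no more detail, so this matches rather than departs from it.
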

\begin{proof}
	Selecting the first scheduled user based on the channel gains requires the $N(M+1)$ arithmetic operations to compute the $N$ channel gains and $\mathcal{O}(N\log_2 N)$ for sorting them in a descending order as in \eqref{eq:ChannelOrder}. Therefore, the computational complexity of this step is proportional to $N( M+ 1 + \log_2 N )$. For each inner loop, we first need to compute the instantaneous throughput in \eqref{eq:Capacityk}, which requires the $(M+2)|\widetilde{\mathcal{K}}_m(t)| +3$ arithmetic operations. The computational complexity needed to solve \eqref{Prob:Ratev1} scales up with the factor $ |\mathcal{N}(t)|(M+2) |\widetilde{\mathcal{K}}_m(t)|+ 3 |\mathcal{N}(t)| $, thus the inner loop has the computational complexity in the order of $ |\mathcal{N}(t)|(M+2) \sum_{m= |\mathcal{K}(t-1)|}^M  |\widetilde{\mathcal{K}}_m(t)|$. Furthermore, each RZF precoding matrix with the cost as in Lemma~\ref{lemma:ComplexityWtv2} leads to the total computational complexity per inner loop in the order of $ \frac{1}{2}|\mathcal{N}(t)|M^2 \sum_{m=|\mathcal{K}(t-1)|}^M  |\widetilde{\mathcal{K}}_m(t)|^2$. By summing up all the cost and removing the terms with low degree, the result is obtained as in the lemma.
\end{proof}
Lemma~\ref{eq:Complexity} manifests that Algorithm~\ref{Algorithm1} has the computational complexity mainly spending on selecting the new scheduled user, updating the instantaneous channel capacity in \eqref{eq:Rk}, and recomputing the precoding matrix in \eqref{eq:MMSEPrecodMatrix} whenever a unscheduled new user is considered. Among those contributors, the precoding matrix recomputation consumes the highest cost, which is in a quadratic order of the scheduled users and satellite beams  per iteration. Nonetheless, the entire computational complexity is much lower than an exhaustive search. This algorithm can thus perform the user scheduling for a large-scale network with many users.
\vspace*{-0.25cm}
\section{Joint User Scheduling and Data Power Allocation for Sum Rate Maximization} \label{Sec:Joint}
As an extension, this section digs into the benefits of user scheduling and power allocation in boosting the sum throughput with the individual QoS constraints. In particular, we formulate and solve an optimization problem that maximizes the sum throughput subject to a limited power budget at the satellite via jointly considering both the scheduled-user set and power coefficients as optimization variables. 
\vspace*{-0.25cm}
\subsection{Problem Formulation}
By optimizing both the scheduled-user set and power coefficients to enhance the network performance, we would like to solve the following problem:
\begin{subequations} \label{ProblemJoint}
	\begin{alignat}{2}
		& \underset{\{ \mathcal{A}(t) \} }{\mathrm{maximize}}
		& & \, \sum\limits_{t=1}^T \sum\limits_{k \in \mathcal{K}(t) } R_k(\mathcal{A}(t)) \label{eq:Objv1} \\
		& \mbox{subject to}
		& &   \, R_k(\mathcal{A}(t)) \geq \xi_k /T_k, \forall k \in \mathcal{K}(t), \forall t, \label{eq:QoSconstJ}\\
		&&& \, \mathcal{K} (t) \subseteq \{ 1, \ldots, N\}, \forall t, \label{eq:Kt1J}\\
		&&& \, | \mathcal{K} (t)| \leq M, \forall t, \label{eq:Kt2J} \\
		&&& \,  \cup \mathcal{K}(t) \subseteq \{1, \ldots,N \} \label{eq:Kt3J}, \\
		&&& \sum\limits_{k' \in \mathcal{K}(t)} p_{k'}(t) \leq P_{\max}, \forall t,  \\
		&&& p_{k}(t) \geq 0, \forall k \in \mathcal{K}(t) , \forall t, \label{eq:Power1}\\
		&&& \mathcal{A}(t) = \mathcal{K}(t) \cup \{ p_k (t) \}, \forall t. \label{eq:Power2}
	\end{alignat}
\end{subequations}
In comparison to the previous problem in \eqref{Problemv1}, together with the scheduled-user set $\mathcal{K}(t)$, the power coefficients are new optimization variables added to \eqref{ProblemJoint} in each time slot. The practical power constraints in \eqref{eq:Power1} and \eqref{eq:Power2} introduce extra complexity to find the optimal solution. In addition, the objective function \eqref{eq:Objv1} and the QoS constraints \eqref{eq:QoSconstJ} are  challenging to manipulate as they are  multivariate functions of both the scheduled users and power coefficients. The feasible domain of problem~\eqref{ProblemJoint} includes the hybrid optimization variables, either continuous as the transmit power coefficients or discrete as the scheduled-user set. In addition, we stress that the combinatorial structure of \eqref{ProblemJoint} retains from \eqref{Problemv1}; thus, we extend Algorithm~\ref{Algorithm1} to design the joint user scheduling and power allocation solution that performs resource allocation for many available users.
\vspace*{-0.25cm}
\subsection{Two-Stage Algorithm}
For a low computational complexity design, we tackle the non-convexity of problem~\eqref{ProblemJoint} and find a local optimum by proposing a two-stage approach as shown in Algorithm~\ref{Algorithm2}. The first stage focuses on the user scheduling with the fixed transmit power coefficients by the main steps presented in Section~\ref{Sec:2StageAlgo}. The fixed power coefficients may not maximize the strength of desired signals and effectively mitigate mutual interference, but it poses a basic trends for the user scheduling. Hence, in Stage~$1$, we can either implement one of the following options:
\begin{itemize}
 \item[$i)$] Keep the monotonic property of the sum throughput and the individual QoS requirements, as  \eqref{eq:Ser1} and \eqref{eq:Ser2} in the per-time-slot selection of scheduled users. The proposed algorithm will guarantee all the scheduled users always satisfying their individual QoS requirements along with iterations. By following this option, some users may be in unfeasibilities, but the per-user throughput of scheduled users can be  significantly improved as confirmed by numerical results later.
  \item[$ii)$] Relax the conditions \eqref{eq:Ser1} and \eqref{eq:Ser2} in the per-time-slot selection of scheduled users  by supposing that the power allocation will enhance the QoSs of scheduled users with slightly lower data throughput than requested. We must assume that all scheduled users should satisfy their QoSs with the power allocation. By following this option, the sum throughput is maximized.
\end{itemize}
The above options will determinine the scheduled-user set $\mathcal{K}(t)$ for a fixed power level in Stage~$1$ as shown in Step~3.1 in Algorithm~\ref{Algorithm2}. Then, Stage~$2$ reformulates  problem~\eqref{ProblemJoint} so that the transmit power coefficents of all the scheduled users in $\mathcal{K}(t)$ are reallocated at the $t$-th time slot as
\begin{equation} \label{ProblemvData}
	\begin{aligned}
		& \underset{\{ p(t) \} }{\mathrm{maximize}}
		& & \,  \sum\limits_{k \in \mathcal{K}(t) } R_k( \{ p(t) \} ) \\
		& \mbox{subject to}
		& &   \, R_k(\{ p(t) \} ) \geq \frac{\xi_k}{T_k}, \forall k \in \mathcal{K}(t),\\
		&&& \sum\limits_{k' \in \mathcal{K}(t)} p_{k'}(t) \leq P_{\max}, \\
		&&& p_{k}(t) \geq 0, \forall k \in \mathcal{K}(t).
	\end{aligned}
\end{equation}
Compared to \eqref{ProblemJoint}, problem~\eqref{ProblemvData} simplifies the matter since its feasible set is convex,  and the instantaneous throughput only depends on the data power coefficients. This problem is still non-convex due to the objective function. According to Weierstrass' theorem \cite{van2018joint,Horn2013a}, an optimal solution always exists, but  the individual QoS constraints make this problem nontrivial to obtain the global optimum. By using the instantaneous throughput \eqref{eq:Rk} in Lemma~\ref{lemma:ChannelCapacity} together with the  monotonic property of the sum of the logarithm functions, we can
reformulate problem~\eqref{ProblemvData} to an equivalent form as
\begin{equation} \label{Prob:SubProbv3}
	\begin{aligned}
		& \underset{\{ p_{k} (t) \} }{\mathrm{maximize}}
		&&  \prod\limits_{k  \in \mathcal{K}(t) }  \left( 1 + \mathrm{SINR}_{k}  \left( \{ p_{k}(t) \}\right)  \right)  \\
		& \,\,\mathrm{subject \,to}
		&& \mathrm{SINR}_{k}  \left( \{ p_{k}(t) \} \right) \geq \nu_k , \forall k \in \mathcal{K}(t), \\
		& & &   \sum\limits_{k \in \mathcal{K}(t) } p_{k}(t) \leq P_{\max},\\
		&&& p_{k}(t) \geq 0, \forall k \in \mathcal{K}(t),
	\end{aligned}
\end{equation}
where $\nu_k = 2^{\xi_k/T_k} - 1$ indicates that the instantaneous throughput constraint is converted to the corresponding SINR constraint. By introducing the auxiliary variables $\gamma_{k}(t)$ for each scheduled user, we further represent problem~\eqref{Prob:SubProbv3} in an epigraph form \cite[pp. 134]{Boyd2004a}  as follows:
\begin{subequations} \label{Prob:SubProbv4}
	\begin{alignat}{2}
		& \underset{\{ p_{k}(t), \gamma_{k}(t)  \} }{\mathrm{maximize}}
		&&  \prod\limits_{k \in \mathcal{K}(t) }  \gamma_{k}(t) \label{eq:Obv1} \\
		& \mathrm{subject \,to}
		&&\quad  1 + \mathrm{SINR}_{k}  \left( \{ p_{k} (t) \}\right)  \geq \gamma_{k}(t), \forall k \in \mathcal{K}(t), \label{eq:Ratev1} \\
		&&& \quad \mathrm{SINR}_{k}  \left( \{ p_{k} (t) \}\right) \geq \nu_k , \forall k \in \mathcal{K}(t), \label{eq:QoSv1}\\
		&& & \quad \sum\nolimits_{k \in \mathcal{K}(t)} p_{k}(t) \leq P_{\max}, \label{eq:PowerBudget1} \\
		&&& p_{k}(t) \geq 0, \forall k \in \mathcal{K}(t).
	\end{alignat}
\end{subequations}
In fact, the auxiliary variables transform the objective function into a monomial and shift the main nonlinear parts into the constraints. Even though the objective function and the power budget constraint of problem~\eqref{Prob:SubProbv4} are convex, the main challenge comes from the SINR constraints. The standard form of problem~\eqref{Prob:SubProbv4} is stated in Lemma~\ref{lemma:Signomial}.
\begin{lemma} \label{lemma:Signomial}
Problem~\eqref{Prob:SubProbv4} is a signomial program,  which is non-convex on the standard form.
\end{lemma}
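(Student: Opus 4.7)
The plan is to verify that problem~\eqref{Prob:SubProbv4} has the algebraic form of a signomial program (SP) but fails the stricter posynomial-only criterion of a geometric program (GP), and then to use this to conclude non-convexity in the standard form.

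First, I would recall the relevant definitions for the positive variables $\{p_k(t),\gamma_k(t)\}$: a monomial is a term $c\prod_j x_j^{a_j}$ with $c>0$; a posynomial is a finite sum of monomials; a signomial allows the coefficients $c$ to be signed. A standard GP minimizes a posynomial subject to posynomial $\le 1$ and monomial $=1$ constraints, while an SP relaxes this by allowing signomial inequalities. With this in hand I would classify each component of \eqref{Prob:SubProbv4} in turn. The objective \eqref{eq:Obv1} is a product of positive variables, hence a monomial, and maximizing it is equivalent to minimizing the monomial $\prod_k \gamma_k(t)^{-1}$, so this ingredient is compatible with both the GP and SP forms. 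The power budget \eqref{eq:PowerBudget1} is a posynomial in $\{p_k(t)\}$ and can be normalized to $\sum_{k'}p_{k'}(t)/P_{\max}\le 1$, again GP-compatible.

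Next I would handle the per-user SINR constraint \eqref{eq:QoSv1}. Substituting \eqref{eq:SINRk}, cross-multiplying by the (positive) denominator, and dividing both sides by the desired-signal term $p_k(t)|\mathbf{h}_k^H\mathbf{w}_k(t)|^2$ yields
\begin{equation*}
\frac{\nu_k\sigma^2 + \nu_k\sum_{k'\in\mathcal{K}(t)\setminus\{k\}} p_{k'}(t)|\mathbf{h}_k^H\mathbf{w}_{k'}(t)|^2}{p_k(t)|\mathbf{h}_k^H\mathbf{w}_k(t)|^2}\le 1,
\end{equation*}
whose left-hand side is a sum of monomials in $\{p_k(t)\}$ with strictly positive coefficients, i.e.\ a posynomial. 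So \eqref{eq:QoSv1} is GP-compatible as well.

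The obstacle---and the step on which the whole classification turns---is the rate constraint \eqref{eq:Ratev1}. The natural rewriting $\gamma_k(t)/(1+\mathrm{SINR}_k)\le 1$ produces, after clearing denominators, the inequality
\begin{equation*}
\gamma_k(t)\Bigl(\sigma^2+\sum_{k'\ne k} p_{k'}(t)|\mathbf{h}_k^H\mathbf{w}_{k'}(t)|^2\Bigr) - p_k(t)|\mathbf{h}_k^H\mathbf{w}_k(t)|^2 - \sum_{k'\ne k} p_{k'}(t)|\mathbf{h}_k^H\mathbf{w}_{k'}(t)|^2 - \sigma^2 \le 0,
\end{equation*}
which is a signomial with unavoidably negative coefficients on the interference and noise monomials. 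I would argue that no rearrangement can eliminate the minus signs: the posynomial class is not closed under subtraction, and since the numerator $\gamma_k(1+\mathrm{INT}+\sigma^2)$ and denominator $p_k|h_k^H w_k|^2 + \mathrm{INT}+\sigma^2$ in the alternative ratio form share common monomials, the quotient cannot be collapsed into a single posynomial. Hence \eqref{eq:Ratev1} is genuinely a signomial inequality and problem~\eqref{Prob:SubProbv4} matches exactly the definition of an SP.

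Finally I would invoke the standard convexification argument to conclude non-convexity on the standard form. Under the logarithmic change of variables $\tilde{p}_k=\ln p_k$, $\tilde{\gamma}_k=\ln\gamma_k$, posynomial $\le 1$ constraints become log-sum-exp inequalities, which are convex; this is why GPs admit a convex reformulation. Applied to the signomial left-hand side of \eqref{eq:Ratev1}, however, the negative-coefficient monomials produce a difference of log-sum-exp functions, which is in general neither convex nor concave. Therefore problem~\eqref{Prob:SubProbv4} is an SP whose standard form does not admit a convex reformulation, proving the lemma. The one delicate spot is making the ``cannot be posynomial'' claim rigorous rather than merely plausible; I would handle it by displaying the explicit signomial expansion above and observing that the coefficients of the $p_{k'}(t)|\mathbf{h}_k^H\mathbf{w}_{k'}(t)|^2$ terms are strictly negative, an invariant under multiplication by any positive monomial.
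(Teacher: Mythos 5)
Your proof is correct and follows essentially the same route as the paper's: classify the objective \eqref{eq:Obv1} as a monomial, the power budget \eqref{eq:PowerBudget1} and the rewritten QoS constraint \eqref{eq:QoSv1} as posynomials, and then exhibit the cleared-denominator form of \eqref{eq:Ratev1} as a signomial with negative coefficients, which is exactly the paper's expression \eqref{eq:FirstConstraint}. The extra material you add (the log-change-of-variables discussion and the argument that the negative coefficients cannot be removed) only elaborates on what the paper delegates to Definition~\ref{Def2} in Appendix~\ref{Appendix:UsefulDef}, so no substantive difference remains.
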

\begin{proof}
The objective function \eqref{eq:Obv1} of problem~\eqref{Prob:SubProbv4} is monomial and the limited power budget constraint \eqref{eq:PowerBudget1} is posynomial, they are convex (please see Appendix~\ref{Appendix:UsefulDef} for more detail). The individual QoS constraint \eqref{eq:QoSv1} is reformulated as
\begin{equation}
\nu_k \sum\limits_{k' \in \mathcal{K}(t) \setminus \{ k\} } \frac{p_{k'} (t) \big|\mathbf{h}_{k}^H \mathbf{w}_{k'}(t)|^2}{p_{k} (t) \big| \mathbf{h}_{k}^H \mathbf{w}_{k}(t) \big|^2 } + \frac{\nu_k \sigma^2}{p_{k} (t) \big| \mathbf{h}_{k}^H \mathbf{w}_{k}(t) \big|^2 } \leq 1,
\end{equation}
which is also posynomial. The main proof is now to show \eqref{eq:Ratev1} is signomial that is nonconvex. In particular, we recast \eqref{eq:Ratev1} to an equivalent form as
\begin{equation} \label{eq:FirstConstraint}
	\begin{split}
		& \gamma_{k}(t)\sum\limits_{k' \in \mathcal{K}(t) \setminus \{ k\}} p_{k'}(t) \left| \mathbf{h}_{k}^H \mathbf{w}_{k'}(t) \right|^2 + (\gamma_{k}(t) -1 ) \sigma^2  \\
		& - \sum\limits_{k' \in \mathcal{K}(t) } p_{k'}(t) \left| \mathbf{h}_{k}^H \mathbf{w}_{k'}(t) \right|^2  \leq 0,
	\end{split}
\end{equation}
which is a signomial constraint since the left-hand side of \eqref{eq:FirstConstraint} is a signomial function (please see Definition~\ref{Def1} in Appendix~\ref{Appendix:UsefulDef}). Consequently, problem~\eqref{Prob:SubProbv4} is a signomial program as stated in the lemma, concluding the proof.
\end{proof}
Apart from pointing out the inherent non-convexity structure of problem~\eqref{Prob:SubProbv4}, Lemma~\ref{lemma:Signomial} gives a clue to obtain a local solution  by exploiting the signal programming features \cite{Chiang2007b}. The successive optimization approach is deployed to tackle problem~\eqref{Prob:SubProbv4} by following a similar methodology as in \cite{van2018joint}. We now introduce the arithmetic mean-geometric mean inequality to cope with the signomial SINR constraints of problem~\eqref{Prob:SubProbv4}. Specifically, each signomial SINR constraint in every group should be converted to a posynomial one by utilizing the below approximation.
\begin{lemma}\cite[Lemma~$1$]{Chiang2007b} \label{lemma:AMGM}
	Let us consider a posynomial function $f(x)$ defined as a summation from  a set of $|\mathcal{K}(t)|$ monomial functions $\{ h_1(x), \ldots, h_{|\mathcal{K}(t)|}(x) \}$ as
	\begin{equation}
		f(x) = \sum_{k=1}^{|\mathcal{K}(t)|} h_k (x),
	\end{equation}
	then $f(x)$ is lower bounded by a monomial function $\tilde{f}(x)$, which is defined as
	\begin{equation}
		f(x) \geq \tilde{f}(x) = \prod\limits_{k \in \mathcal{K}(t) } \left(\frac{ f_k(x)}{\mu_k} \right)^{\mu_k},
	\end{equation}
	where $\mu_k$ is a non-negative weight of the monomial function $f_k(x)$. Notice that $\tilde{f}(x_0)$ is the best approximation near $x_0$ by the first-order Taylor expansion when the weight $\mu_k$  assigned to scheduled user~$k$ is computed as
	\begin{equation} \label{eq:Weight}
		\mu_k = \frac{f_k(x_0)}{\sum\limits_{k'=1}^{|\mathcal{K}(t)|} f_{k'}(x_0)}.
	\end{equation}
\end{lemma}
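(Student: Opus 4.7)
The plan is to obtain the lower bound from the weighted arithmetic-geometric mean (AM-GM) inequality and then verify that the particular weights in \eqref{eq:Weight} force both a value match and a gradient match at $x_0$, which together characterize a first-order Taylor approximation. The overall argument is short and does not require any signomial machinery beyond positivity of posynomials on the positive orthant.

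First I would fix a non-negative weight vector $\{\mu_k\}$ with $\sum_k \mu_k = 1$ and invoke the weighted AM-GM inequality $\sum_k \mu_k a_k \geq \prod_k a_k^{\mu_k}$ valid for any positive numbers $a_k$. Setting $a_k = f_k(x)/\mu_k$ is legitimate because each $f_k$ is a monomial, hence strictly positive on the positive orthant where posynomials are defined. The left-hand side collapses to $\sum_k f_k(x) = f(x)$, while the right-hand side is exactly $\tilde{f}(x) = \prod_k (f_k(x)/\mu_k)^{\mu_k}$, yielding $f(x) \geq \tilde{f}(x)$ for every admissible weight vector. This already proves the inequality part of the lemma.

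Next I would justify the specific choice in \eqref{eq:Weight}. The weights $\mu_k = f_k(x_0)/\sum_{k'} f_{k'}(x_0)$ are non-negative and sum to one, so the bound applies. Equality in the weighted AM-GM step holds iff all $a_k$ coincide, and with these weights $a_k = f_k(x_0)/\mu_k = \sum_{k'} f_{k'}(x_0)$ is manifestly independent of $k$, giving $\tilde{f}(x_0) = f(x_0)$. For first-order agreement I would pass to logarithms. From
\begin{equation}
\log \tilde{f}(x) = \sum_k \mu_k \log f_k(x) - \sum_k \mu_k \log \mu_k,
\end{equation}
the gradient at $x_0$ equals $\sum_k \mu_k \nabla \log f_k(x_0)$, while $\nabla \log f(x_0) = \sum_k \bigl(f_k(x_0)/\sum_{k'} f_{k'}(x_0)\bigr)\, \nabla \log f_k(x_0)$. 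These two expressions coincide precisely by the definition of $\mu_k$ in \eqref{eq:Weight}, so $\nabla \tilde{f}(x_0) = \nabla f(x_0)$ as well. Combining value and gradient match shows that $\tilde{f}$ is the best monomial local approximation of $f$ at $x_0$.

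The only mildly subtle point is recognizing that the AM-GM equality condition (equal arguments) is what simultaneously yields the value and derivative match, so the weight choice in \eqref{eq:Weight} is essentially forced rather than ad hoc; everything else is a routine application of AM-GM and logarithmic differentiation.
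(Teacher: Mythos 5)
Your proof is correct. Note that the paper itself offers no proof of this statement: it is imported verbatim as \cite[Lemma~1]{Chiang2007b}, so there is nothing internal to compare against. Your argument is the standard one for this result: the weighted AM--GM inequality with the substitution $a_k = f_k(x)/\mu_k$ gives the global lower bound, and the choice $\mu_k = f_k(x_0)/\sum_{k'} f_{k'}(x_0)$ makes all the $a_k$ equal at $x_0$, yielding $\tilde{f}(x_0)=f(x_0)$; the gradient match then follows either by your direct logarithmic-differentiation computation or, more slickly, from the observation that $f-\tilde{f}\ge 0$ attains its minimum value $0$ at the interior point $x_0$, which forces $\nabla f(x_0)=\nabla\tilde{f}(x_0)$. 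The only cosmetic issue is the lemma's own notational slip ($h_k$ versus $f_k$ for the monomial terms), which you silently and correctly resolve by working with $f_k$ throughout.
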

Through Lemma~\ref{lemma:AMGM}, the non-convexity of problem~\eqref{Prob:SubProbv4} is manipulated by introducing the function $\tilde{g}_{k}(\{ p_{k'}(t) \} )$ associated with scheduled user~$k$ as 
\begin{equation} \label{eq:gkm}
	\tilde{g}_{k}(\{ p_{k'}(t) \} ) = \left(\frac{\sigma^2}{\mu_{0k}(t)} \right)^{\mu_{0k}(t)}\prod_{k'\in \mathcal{K}(t)} \left( \frac{p_{k'}(t)z_{kk'}(t)}{\mu_{kk'}(t)} \right)^{\mu_{kk'}(t)} ,
\end{equation}
where $z_{kk'}(t) =  | \mathbf{h}_{k}^H \mathbf{w}_{k'}(t) |^2$ and the non-negative weights, $\mu_{0k}(t), \mu_{kk'}(t),$ satisfy the normalization condition:
\begin{equation} \label{eq:WeightsSINRv1}
	\mu_{0k}(t) + \sum\limits_{k'\in \mathcal{K}(t)} \mu_{kk'}(t) = 1.
\end{equation} 
After that, a stationary solution to problem~\eqref{Prob:SubProbv4} can be obtained in polynomial time as stated in Lemma~\ref{theorem:geometricprog}.
\begin{lemma} \label{theorem:geometricprog}
The global optimum to problem~\eqref{Prob:SubProbv4} is lower bounded by the solution of the following geometric program
	\begin{equation} \label{Prob:SubProbv5}
		\begin{aligned}
			& \underset{\{ p_{k} (t) \geq 0, \gamma_{k}(t) \} }{\mathrm{maximize}}
			&&   \prod\limits_{k \in \mathcal{K}(t) }   \gamma_{k}(t)  \\
			& \,\,\mathrm{subject \,to}
			&& \mathrm{Constraint} \, \, \eqref{eq:SINRConstraintv1}, \forall k \in \mathcal{K}(t),\\
			&&& \mathrm{SINR}_{k}  \left( \{ p_{k} (t) \}\right) \geq \nu_k , \forall k \in \mathcal{K}(t), \\
			&& &   \sum\limits_{k \in \mathcal{K}(t)} p_{k}(t) \leq P_{\max},
		\end{aligned}
	\end{equation}
	where the constraint \eqref{eq:Ratev1} for scheduled user~$k$ is lower bounded by the following constraint:
	\begin{equation} \label{eq:SINRConstraintv1}
		\begin{split}
		\gamma_k(t) \sum\limits_{k'\in \mathcal{K}(t)\setminus \{ k\}}  \frac{p_{k'}(t) z_{kk'}(t)}{\tilde{g}_{k}(\{ p_{k'}(t) \} )} + \frac{\gamma_k(t)\sigma^2}{\tilde{g}_{k}(\{ p_{k'}(t) \} )} \leq 1.
		\end{split}
	\end{equation}
\end{lemma}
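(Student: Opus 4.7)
The plan is to show that the new constraint~\eqref{eq:SINRConstraintv1} is a \emph{tightening} of the original signomial constraint~\eqref{eq:Ratev1}, so that any feasible point of problem~\eqref{Prob:SubProbv5} is also feasible for problem~\eqref{Prob:SubProbv4}. Together with an identical objective, this will immediately give the desired lower bound. First I would rewrite~\eqref{eq:Ratev1} as a ratio-of-posynomials-$\leq 1$ inequality. Using $z_{kk'}(t) = |\mathbf{h}_k^H \mathbf{w}_{k'}(t)|^2$ and expanding $1+\mathrm{SINR}_k \geq \gamma_k(t)$, one obtains the equivalent form
\begin{equation}
\frac{\gamma_k(t) \left( \sum_{k' \in \mathcal{K}(t)\setminus\{k\}} p_{k'}(t) z_{kk'}(t) + \sigma^2 \right)}{\sum_{k' \in \mathcal{K}(t)} p_{k'}(t) z_{kk'}(t) + \sigma^2} \leq 1,
\end{equation}
in which the numerator and denominator are both posynomials in the variables $\{p_{k'}(t)\}$ (with $\gamma_k(t)$ a positive monomial multiplier).

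The central step is to handle the denominator, which is a sum of $|\mathcal{K}(t)|+1$ monomial terms $p_{k'}(t) z_{kk'}(t)$ and $\sigma^2$. I would invoke Lemma~\ref{lemma:AMGM} directly with these monomials as the $h_{k'}(x)$ and with weights $\mu_{0k}(t), \{\mu_{kk'}(t)\}$ satisfying the normalization~\eqref{eq:WeightsSINRv1}. The resulting monomial lower bound coincides with $\tilde{g}_k(\{p_{k'}(t)\})$ defined in~\eqref{eq:gkm}; that is,
\begin{equation}
\sum_{k' \in \mathcal{K}(t)} p_{k'}(t) z_{kk'}(t) + \sigma^2 \;\geq\; \tilde{g}_k(\{p_{k'}(t)\}).
\end{equation}
Replacing the denominator by this smaller quantity makes the left-hand side of the ratio inequality larger, so any $\{p_{k'}(t),\gamma_k(t)\}$ satisfying the new inequality automatically satisfies the original one. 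Rearranging the tightened inequality and grouping the two resulting monomial terms produces exactly constraint~\eqref{eq:SINRConstraintv1}.

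Next I would verify that problem~\eqref{Prob:SubProbv5} is a geometric program on standard form. The objective $\prod_k \gamma_k(t)$ is a monomial (maximization of a monomial is equivalent to minimization of its reciprocal, which is again a monomial). The tightened constraint~\eqref{eq:SINRConstraintv1} is a posynomial bounded above by $1$ because $\tilde{g}_k$ is a monomial appearing in the denominator (hence its reciprocal is a monomial), multiplied by a sum of monomials in the numerator. The QoS constraints were already shown to be posynomial-$\leq 1$ in the proof of Lemma~\ref{lemma:Signomial}, and the sum-power constraint becomes the posynomial inequality $\sum_{k\in\mathcal{K}(t)} p_k(t)/P_{\max} \leq 1$. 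After the standard logarithmic change of variables, all constraints become convex, revealing the hidden convex structure.

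Because the feasible set of~\eqref{Prob:SubProbv5} is contained in that of~\eqref{Prob:SubProbv4} while the objectives are identical, the optimal value of the geometric program lower bounds the global optimum of~\eqref{Prob:SubProbv4}, proving the lemma. The only subtle point, and the main place where care is needed, is the bookkeeping for the AM-GM step: one must use $|\mathcal{K}(t)|+1$ weights (including $\mu_{0k}(t)$ for the noise term) and check that the particular grouping in~\eqref{eq:gkm} matches what the AM-GM inequality produces after the monomial powers $(p_{k'}(t) z_{kk'}(t)/\mu_{kk'}(t))^{\mu_{kk'}(t)}$ are multiplied out; this is routine but where a sign or index mistake would invalidate the bound.
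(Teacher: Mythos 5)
Your proposal is correct and follows essentially the same route as the paper: both identify the denominator posynomial $g_k(\{p_{k'}(t)\})=\sum_{k'\in\mathcal{K}(t)}p_{k'}(t)z_{kk'}(t)+\sigma^2$, lower-bound it by the monomial $\tilde{g}_k$ via Lemma~\ref{lemma:AMGM} with the normalized weights in \eqref{eq:WeightsSINRv1}, and observe that the resulting tightened posynomial constraint \eqref{eq:SINRConstraintv1} shrinks the feasible set so the geometric program's optimum lower-bounds that of \eqref{Prob:SubProbv4}. The only cosmetic difference is that you phrase \eqref{eq:Ratev1} as a ratio-of-posynomials inequality while the paper keeps it as a two-sided inequality before dividing by $\tilde{g}_k$.
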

\begin{proof}
	The proof is to show the SINR constraints of problem~\eqref{Prob:SubProbv4} are bounded by the corresponding expressions in \eqref{Prob:SubProbv5} by utilizing the arithmetic mean-geometric mean inequality in Lemma~\ref{lemma:AMGM}. The detailed proof is available in Appendix~\ref{Appendix:geometricprog}.
\end{proof}
We stress that Lemma~\ref{theorem:geometricprog} helps us obtain a local optimum to problem~\eqref{Prob:SubProbv4} by solving the geometric program \eqref{Prob:SubProbv5}. Since \eqref{Prob:SubProbv5} involves a hidden convex structure, we can exploit the main steps reported in \cite{Boyd2004a} to obtain the global optimum. Fortunately, problem~\eqref{Prob:SubProbv5} is in standard form, which can be solved by utilizing the interior point methods from a general-purpose toolbox such as CVX \cite{cvx2015}.  To make the local solution better, we now exploit the successive optimization approach in an iterative manner. In more detail, from an initial set of the transmit powers $\big\{  p_{k}^{(0)}(t) \big\}$ in the feasible domain with $ p_{k}^{(0)}(t)  = P_{\max}/ |\mathcal{K}(t)|$, the weights are updated at the $i$-th iteration as
\begin{align}
	\mu_{0k}^{(i)}(t) &=  \frac{\sigma^2}{\sum\limits_{k'' \in \mathcal{K}(t)} p_{k''}^{(i-1)}(t) z_{kk''}(t) + \sigma^2 }, \label{eq:WeightsSINRO1} \\
	\mu_{kk'}^{(i)}(t) &= \frac{p_{k'}^{(i-1)}(t) z_{kk'}(t)}{\sum\limits_{k'' \in \mathcal{K}(t)} p_{k''}^{(i-1)}(t) z_{kk''}(t) + \sigma^2 }, \label{eq:WeightsSINRO2}
\end{align}
It should be noticed that the updates in \eqref{eq:WeightsSINRO1} and \eqref{eq:WeightsSINRO2} ensure the condition \eqref{eq:WeightsSINRv1}.
Subsequently, the global optimum to the geometric program \eqref{Prob:SubProbv5} is obtained, say $\big\{ \rho_{km}^{(i)} \big\}$. This iterative process will be terminated when the variation between two consecutive iterations is sufficiently small, for example,
\begin{equation}
	\left|\sum\limits_{k \in \mathcal{K}(t)} R_{k} \big( \{ p_{k'}^{(i-1)} (t)\} \big) -  R_{k} \big( \{ p_{k'}^{(i)} (t) \} \big) \right| \leq \epsilon, 
\end{equation}
where $\epsilon$ is a given accuracy. The proposed power allocation to maximize the total channel capacity in the entire system is summarized in Algorithm~\ref{Algorithm2}. 
\begin{algorithm}[t]
	\caption{A local solution to problem~\eqref{ProblemJoint} by successive optimization approach} \label{Algorithm2}
	\textbf{Input}:  Available-user set $\mathcal{N}(0) \leftarrow \{1, \ldots, N \}$; Scheduled-user set $\mathcal{K}(0) \leftarrow \emptyset$; Propagation channel vectors $\{\mathbf{h}_1, \ldots, \mathbf{h}_N \}$; QoS requirements $\{ \xi_1, \ldots, \xi_k \}$; Number of time slots $T$ and individual scheduled time slots $\{T_1, \ldots, T_N\}$; Transmit data powers $\{ p_1, \ldots, p_{N} \}$.
	\begin{itemize}
		\item[1.] Select scheduled user~$\pi_1$ based on the best channel gain as obtained in \eqref{eq:ChannelOrder}.
		\item[2.] Set $t=1$, then update $\mathcal{N}(1)$ and $\mathcal{K}(1)$ by \eqref{eq:1stUpdate}.
		\item[3.] \textbf{while} $t \leq T$ \textbf{do}
		\begin{itemize}
		\item[] \textit{\textcolor{blue}{Stage~$1$: Perform user scheduling}}
		\item[3.1.] Select the scheduled-user set   $\mathcal{K}(t)$ by applying Algorithm~\ref{Algorithm1} with or without checking  conditions \eqref{eq:Ser1} and \eqref{eq:Ser2}.
		\item[] \textit{\textcolor{blue}{Stage~$2$: Perform power allocation}}
		\item[3.2.] Set $p_k^{(0)} = p_k, \forall k \in \mathcal{K}(t),$ and compute the weight values $\{ \mu_{0k}^{(1)}, \mu_{kk'}^{(1)} \}$ by \eqref{eq:WeightsSINRO1} and \eqref{eq:WeightsSINRO2}. Set $i=1$.
		\item[3.3.] Iteration $i$:
		\begin{itemize}
			\item [3.3.1.] Solve problem~\eqref{Prob:SubProbv5} with the weight values $\{ \mu_{0k}^{(i)}, \mu_{kk'}^{(i)} \}$ to get the optimal powers $\{ p_{k}^{(i)} (t) \}$.
			\item [3.3.2.] Update the weight values  $\{ \mu_{0k}^{(i+1)}, \mu_{kk'}^{(i+1)} \}$ from $\{ p_{k}^{(i)} (t) \}$ by \eqref{eq:WeightsSINRO1} and \eqref{eq:WeightsSINRO2}.
		\end{itemize}
	    \item[3.4.] Set $t=t+1$ and repeat Steps $3.3.1$ and $3.3.2$ until convergence.
		\end{itemize}
	    \item[4.] \textbf{end while}
	\end{itemize}
	\textbf{Output}: Scheduled user sets and transmit power coefficients. % in the observed window time.
	 \vspace*{-0.0cm}
\end{algorithm}
At the convergence, a fixed point solution has the property as shown in Theorem~\ref{theorem:KKT}.
\begin{theorem} \label{theorem:KKT}
	The power solution obtained by Algorithm~\ref{Algorithm2} at the $i$-th time slot converges to a fixed point, which is a Karush–Kuhn–Tucker (KKT) point of problem~\eqref{Prob:SubProbv4}.
\end{theorem}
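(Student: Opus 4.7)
The plan is to invoke the three standard conditions under which a successive convex approximation scheme is known to converge to a KKT point of the original non-convex program (the classical Marks--Wright framework, specialized here to signomial programming via AM--GM). Concretely, I would organize the argument around Lemma~\ref{lemma:AMGM} and the explicit form of the surrogate monomial $\tilde{g}_{k}(\{p_{k'}(t)\})$ in \eqref{eq:gkm}, together with the weight update rules \eqref{eq:WeightsSINRO1}--\eqref{eq:WeightsSINRO2}.

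First, I would establish the three tightness properties of the surrogate. Let $g_{k}(\{p_{k'}(t)\}) = \sum_{k'\in\mathcal{K}(t)} p_{k'}(t) z_{kk'}(t) + \sigma^{2}$ denote the true posynomial that the monomial $\tilde{g}_{k}$ replaces. By Lemma~\ref{lemma:AMGM} and the particular choice of weights in \eqref{eq:WeightsSINRO1}--\eqref{eq:WeightsSINRO2}, a direct computation yields (i) $\tilde{g}_{k}(\{p_{k'}^{(i-1)}(t)\}) = g_{k}(\{p_{k'}^{(i-1)}(t)\})$ (value match at the linearization point), (ii) $\nabla \tilde{g}_{k}(\{p_{k'}^{(i-1)}(t)\}) = \nabla g_{k}(\{p_{k'}^{(i-1)}(t)\})$ (gradient match at the linearization point), and (iii) $\tilde{g}_{k}(\{p_{k'}(t)\}) \leq g_{k}(\{p_{k'}(t)\})$ globally. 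Property (iii) is exactly the AM--GM inequality, and properties (i)--(ii) follow from the fact that the chosen weights are the normalized monomial values at $\{p_{k'}^{(i-1)}(t)\}$.

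Next, I would use these properties to prove monotonic non-decrease and convergence of the objective. Because $\tilde{g}_{k}\leq g_{k}$, the inner-approximation constraint \eqref{eq:SINRConstraintv1} is strictly more restrictive than \eqref{eq:Ratev1}; hence every feasible point of \eqref{Prob:SubProbv5} is also feasible for \eqref{Prob:SubProbv4}, so the $\{p_{k}^{(i)}(t)\}$ generated by Algorithm~\ref{Algorithm2} remain feasible for the signomial program throughout. The tightness property (i) shows that the previous iterate $\{p_{k}^{(i-1)}(t)\}$ is itself feasible for the next surrogate geometric program (since the approximate constraint coincides with the true one at that point), so the optimal value of \eqref{Prob:SubProbv5} at iteration $i$ is at least the value attained by $\{p_{k}^{(i-1)}(t)\}$. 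This gives a monotonically non-decreasing sequence $\sum_{k}R_{k}(\{p_{k}^{(i)}(t)\})$, which is upper bounded by the finite sum rate achievable under the power budget $P_{\max}$, and therefore converges. Any limit point $\{p_{k}^{\star}(t)\}$ is a fixed point of the iteration.

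Finally, and this is the step I expect to be the main obstacle, I would verify that a fixed point is in fact a KKT point of the signomial program \eqref{Prob:SubProbv4}. The idea is to write the KKT system of the surrogate geometric program \eqref{Prob:SubProbv5} at $\{p_{k}^{\star}(t),\gamma_{k}^{\star}(t)\}$ with Lagrange multipliers for the approximate SINR constraint, the true QoS constraint, and the power budget. Using properties (i) and (ii) at the fixed point, the gradients of the approximate and the original constraints coincide and their constraint values coincide, so the Lagrangian stationarity and complementary slackness conditions of \eqref{Prob:SubProbv5} translate term by term into the Lagrangian stationarity and complementary slackness conditions of \eqref{Prob:SubProbv4} with the \emph{same} multipliers. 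Primal feasibility for the original problem has already been established, and dual feasibility is inherited. This yields a full KKT tuple for \eqref{Prob:SubProbv4}, completing the proof. The delicate point is bookkeeping the chain rule through the auxiliary variables $\gamma_{k}(t)$ and confirming that the constraint qualifications needed to apply KKT (e.g., a Slater-type condition) hold at the fixed point, which follows from the strict feasibility of the initial power allocation $p_{k}^{(0)}(t) = P_{\max}/|\mathcal{K}(t)|$ together with the monotone progression of the algorithm.
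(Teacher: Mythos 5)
Your proposal is correct and follows essentially the same route as the paper: both rest on the Marks--Wright inner-approximation framework (the paper's cited reference), establishing the value-match, gradient-match, and global-bound properties of the AM--GM surrogate $\tilde{g}_k$ (the paper's \eqref{eq:gki}--\eqref{eq:gkiv2}), deducing feasibility preservation and a monotonically non-decreasing, bounded objective, and concluding that at a fixed point the KKT systems of \eqref{Prob:SubProbv5} and \eqref{Prob:SubProbv4} coincide under a Slater-type constraint qualification. The only cosmetic difference is that the paper expresses the monotonicity via a chain of SINR inequalities rather than via feasibility of the previous iterate in the new surrogate, which is an equivalent argument.
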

\begin{proof}
	The proof is adapted from the general framework in \cite{Marques1978a} to our system model and notation, whose main steps are sketched in Appendix~\ref{Appendix:KKT}.
\end{proof}
%\vspace*{-0.2cm}

For the power allocation at the $i$-th time slot, problem~\eqref{Prob:SubProbv5} includes $2 |\mathcal{K}(t)|$ optimization variables and $2 |\mathcal{K}(t)| +1$ constraints, thus the computational complexity to solve this problem by the interior-point methods is of the order of
\begin{equation}
C_4 (t) = \mathcal{O} \big( L (t) L^{(i)}(t)  \max\{ 8 |\mathcal{K}(t)|^3 + 4 |\mathcal{K}(t)|^2, F^{(i)}(t)\}\big),
\end{equation}
where
 $F^{(i)}(t)$ is the cost of computing the first and second derivatives of the objective and constraints functions of problem~\eqref{Prob:SubProbv5} by utilizing the set of weight values $\big\{ \mu_{0k}^{(i)}(t),  \mu_{kk'}^{(i)}(t) \big\}$ as defined in \eqref{eq:WeightsSINRO1}-\eqref{eq:WeightsSINRO2};   
 $L^{(i)}(t)$ is the number of iterations that the interior-point methods need to obtain the solution to problem~\eqref{Prob:SubProbv5}. As reported in \cite{Boyd2004a}, $L^{(i)}(t)$ is in the range between $10$ and $100$; and $ L (t)$ is the number of iterations required to obtain a KKT point solution as stated in Theorem~\ref{theorem:KKT}. Therefore, the power allocation spends for all the observed window time $T$ is in the order of $C_4 = \sum_{t=1}^T C_4(t)$. In a nutshell, we recall that the computational complexity of Algorithm~\ref{Algorithm2} is the total cost of the user scheduling and the power allocation, which is roughly $\mathcal{O}(C_1 + C_2 + C_3 + C_4)$. 
 \begin{remark}
 For a given user-scheduling set, the proposed power allocation approach is analytically proved to reach a KKT point solution to problem~\eqref{ProblemvData} after a number of iterations. Nonetheless, the power solution is a local optimum of the original problem~\eqref{ProblemJoint} due to its nonconvex structure. Algorithm~\ref{Algorithm2} is expected to provide a preliminary toolbox to evaluate the sum throughput over a window time while maintaining the individual QoS requirements of scheduled users under the limited transmit power at the GEO satellite. Proper user scheduling makes it reasonable for the considered framework to assume that the system should serve all the available users with equal or higher than their QoS requirements. Nevertheless, the congestion issue might appear in practice when at least one user cannot satisfy its demand due to the limited power budget \cite{bui2021robust}. Subsequently, user scheduling and congestion control should be an interesting extension.  
 \end{remark}
 \begin{figure}[t]
 	\centering
 	\includegraphics[trim=0.6cm 0.0cm 1.2cm 0.8cm, clip=true, width=3.5in]{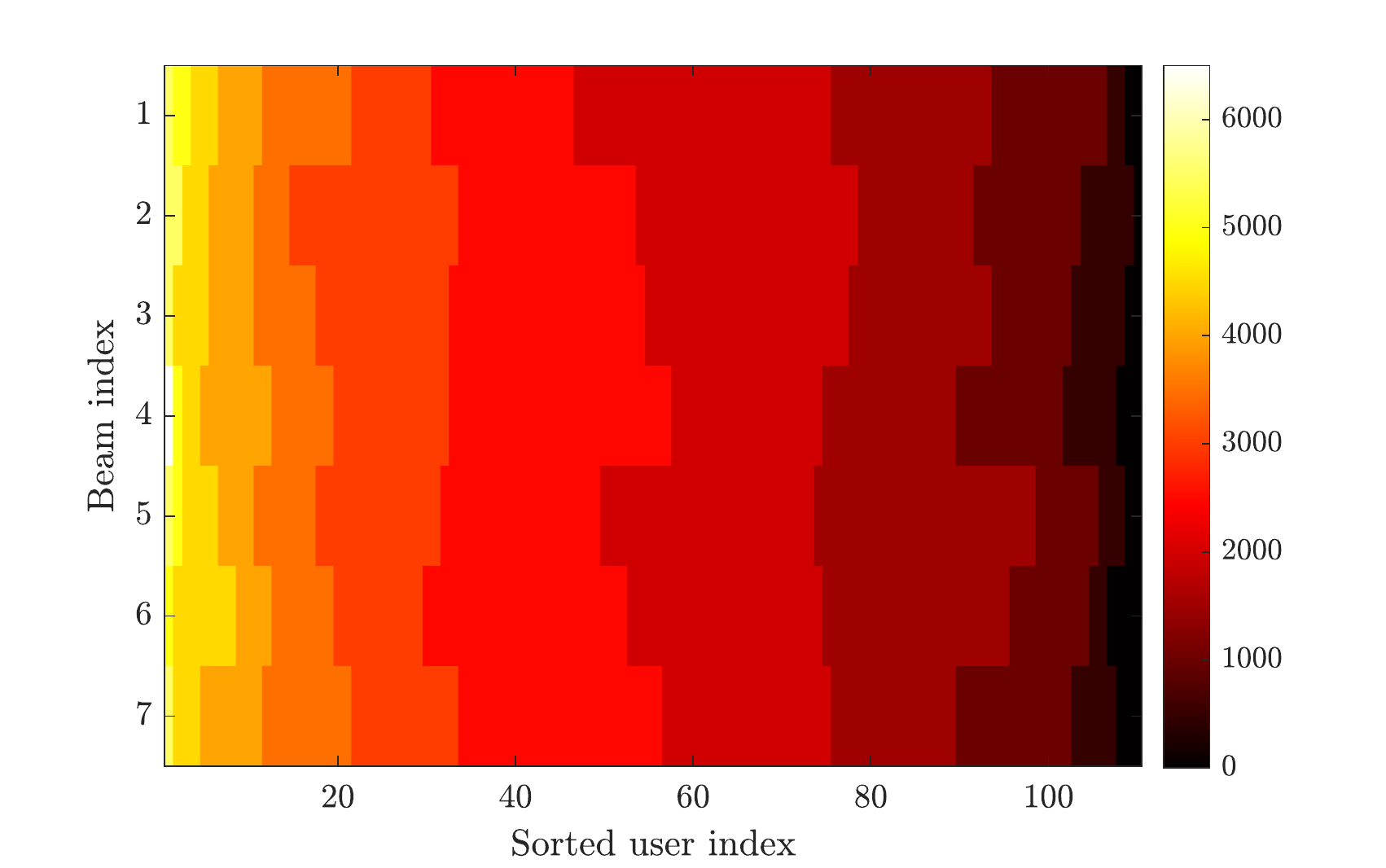} \vspace*{-0.25cm}
 	\caption{The individual QoS requirement [Mbps] per user in each beam over the observed window time.}
 	\label{FigQoSRequirement}
 	%\vspace*{-0.2cm}
 \end{figure}
\begin{figure}[t]
	\centering
	\includegraphics[trim=0.6cm 0.0cm 1.2cm 0.6cm, clip=true, width=3.5in]{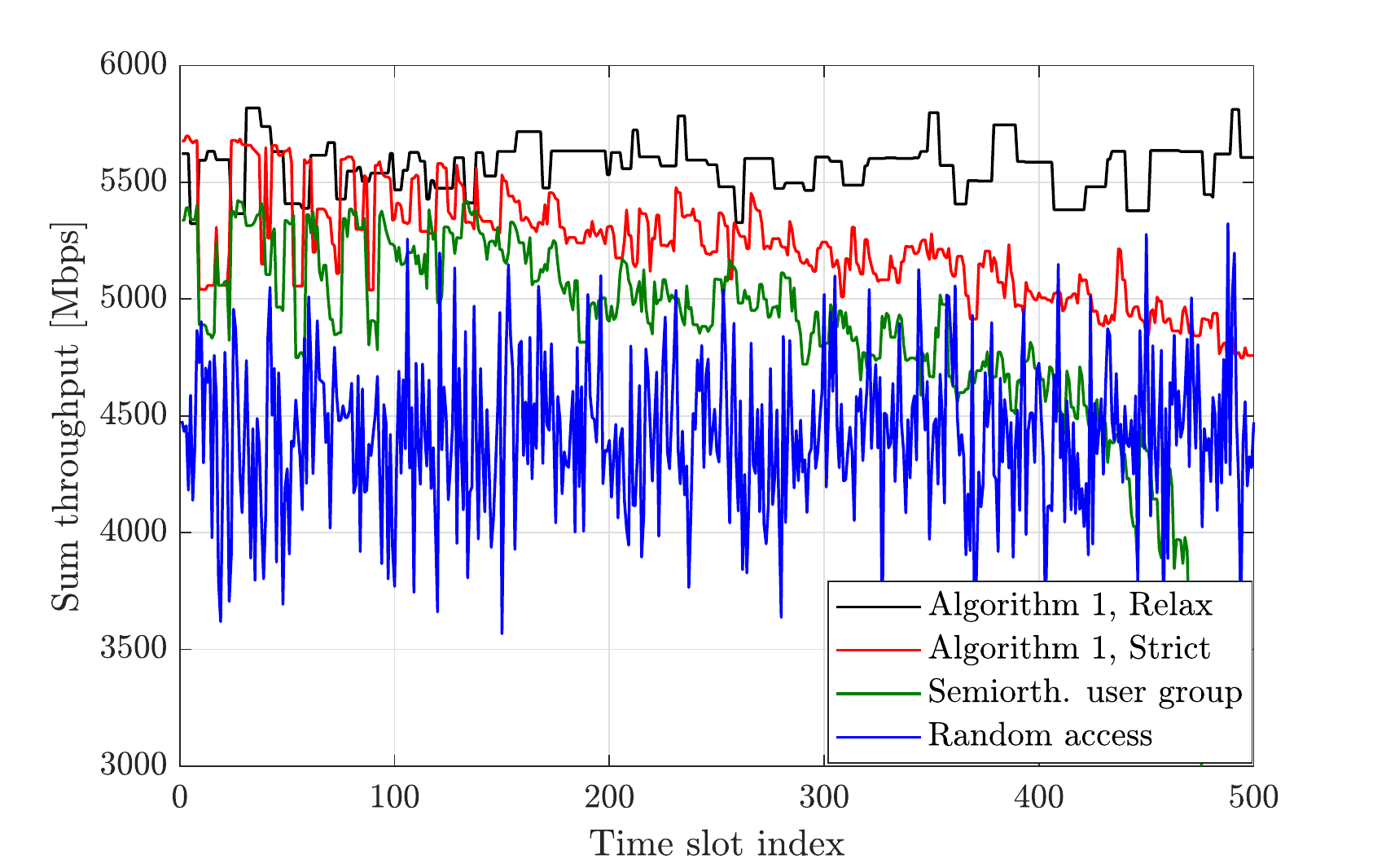} \vspace*{-0.25cm}
	\caption{The sum throughput [Mbps] over the observed window time with a fixed transmit power level.}
	\label{FigFixedPowerSumRate}
	%\vspace*{-0.2cm}
\end{figure}
\begin{figure}[t]
	\centering
	\includegraphics[trim=0.6cm 0.0cm 1.2cm 0.6cm, clip=true, width=3.5in]{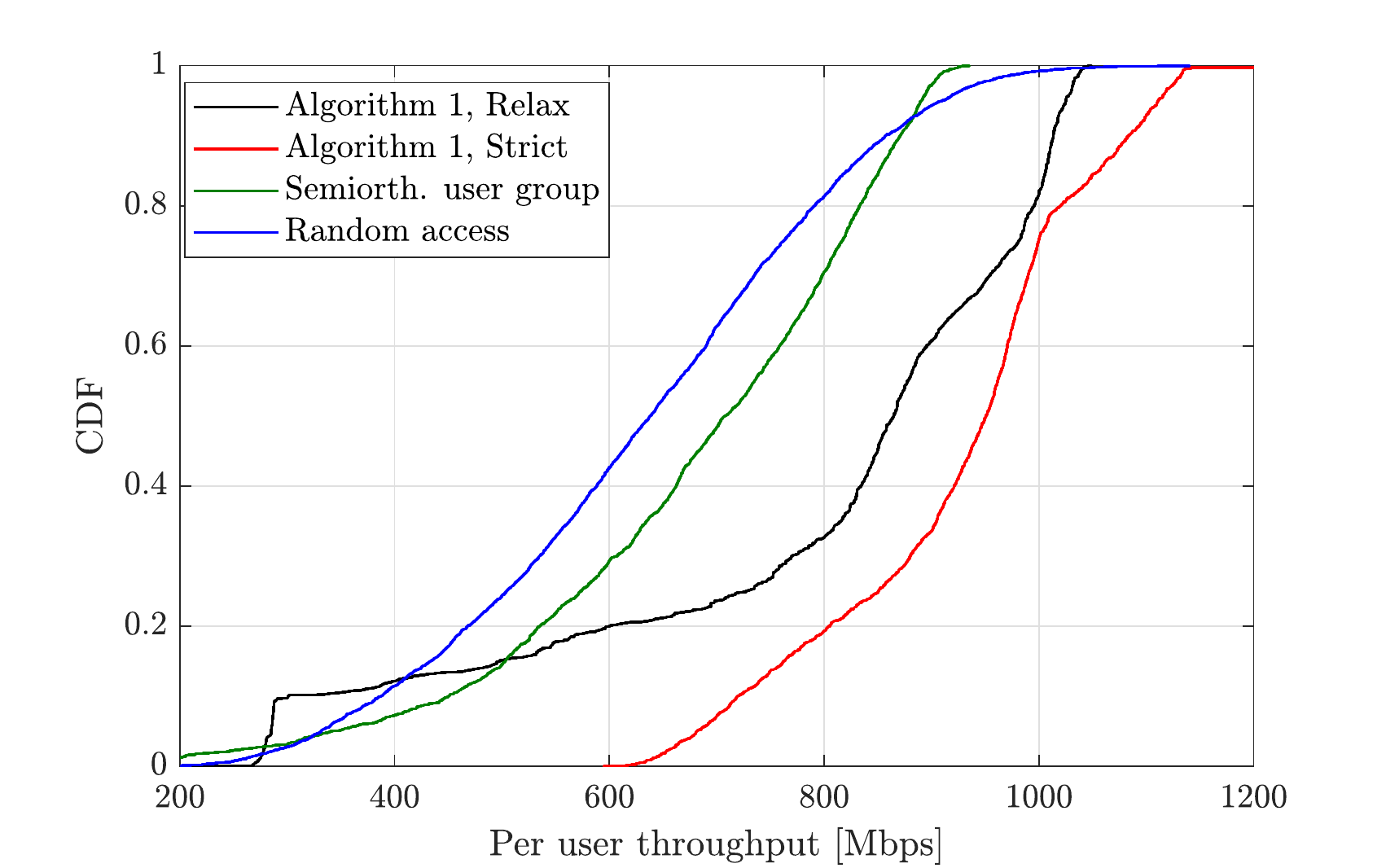} \vspace*{-0.25cm}
	\caption{The CDF of the per user throughput [Mbps] over the observed window time with a fixed transmit power level.}
	\label{FigResultedPerUserRate}
	\vspace*{-0.2cm}
\end{figure}
\begin{figure}[t]
	\centering
	\includegraphics[trim=0.6cm 0.0cm 1.2cm 0.6cm, clip=true, width=3.5in]{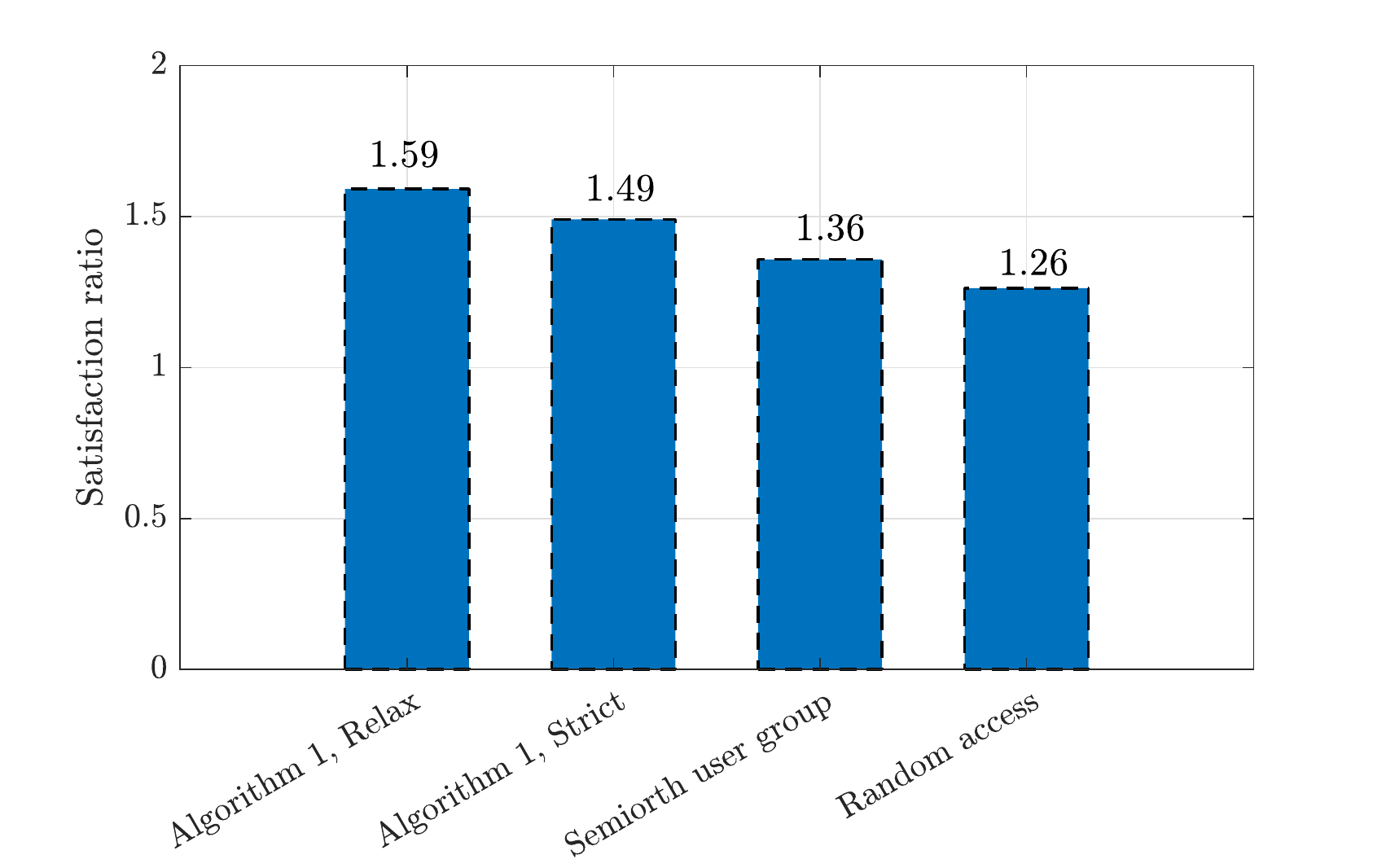} \vspace*{-0.25cm}
	\caption{The satisfaction ratio between the aggregated throughput [Mbps] in \eqref{eq:Rk} and the individual QoS requirement over the observed window time with a fixed transmit power level.}
	\label{FigFixSatisRatio}
	\vspace*{-0.2cm}
\end{figure}
\vspace*{-0.2cm}
\section{Numerical Results} \label{Sec:NumericalResults}
%\vspace*{-0.1cm}
We consider a GEO satellite system with seven beams each serving $110$ users, uniformly distributed in each beam. The observed window time includes $500$ time slots. A sum power-constrained system is regarded with the per-beam power of $10$~dBW. The system bandwidth is $500$~MHz, and the carrier frequency is $19.95$~GHz. The maximum transmit power is $P_{\max} = 18.45$~[dbW] that is related to the average beam-pattern gain $44.4$~[dBi] and the effective isotropic radiated power (EIRP) $-27$~[dbW/Hz]. The noise variance is $-118.3$~[dB], corresponding to the noise figure $2.28$~dB. The receive antenna diameter is $0.6$~m with an efficiency $0.6$. For simplicity, we normalize that each time slot to one second and the average QoS requirement per time slot is $500$~Mbps. The total number of time slots occupied by per scheduled user, $T_k , \forall k,$ are in the range $[0, 13]$ by a uniform distribution. Consequently, the individual QoS requirements for the available users in all the beams are shown in Fig.~\ref{FigQoSRequirement}. We sort the users' QoS requirements in a descending order varying from $0$~Mb to $6500$~Mbps over the observed window time for better visualization. The system performance is evaluated with either a fixed or optimized power allocation. The following benchmarks are included for comparison  to demonstrate the efficiency of the proposed optimization frameworks:
\begin{itemize}
\item[$i)$] \textit{Proposed user scheduling algorithm} is presented in Algorithm~\ref{Algorithm1} with a fixed power level that guarantees the individual QoS constraints (Algorithm~\ref{Algorithm1}, Strict) and without the QoS constraints (Algorithm~\ref{Algorithm1}, Relax). 
\item[$ii)$] \textit{Joint user scheduling and power allocation algorithm} is presented in Algorithm~\ref{Algorithm2} that takes the double benefits by selecting good users and performing power allocation to maximize the sum throughput. In Stage~$1$, if the user scheduling algorithm does not guarantee the individual QoSs, we denote the solver as  Algorithm~\ref{Algorithm2}, Relax. Otherwise, it is designated as Algorithm~\ref{Algorithm2}, Strict. 
\item[$iii)$] \textit{Semiorthogonal user group} was proposed in \cite{Yoo2006a} by exploiting the orthogonality among  propagation channels at time-slot level. The number of scheduled users and satellite beams are assumed to be equal. Additionally, the user scheduling does not include the QoS requirements into account.
\item[$iv)$] \textit{Random access} is a low computational complexity benchmark and served as the baseline in previous works \cite{8371220}. Along with time slots, the number of scheduled users is randomly selected and equal to the number of satellite beams. There is no guarantee on the QoS requirements.
\end{itemize}
\begin{figure}[t]
	\centering
	\includegraphics[trim=0.6cm 0.0cm 1.2cm 0.6cm, clip=true, width=3.5in]{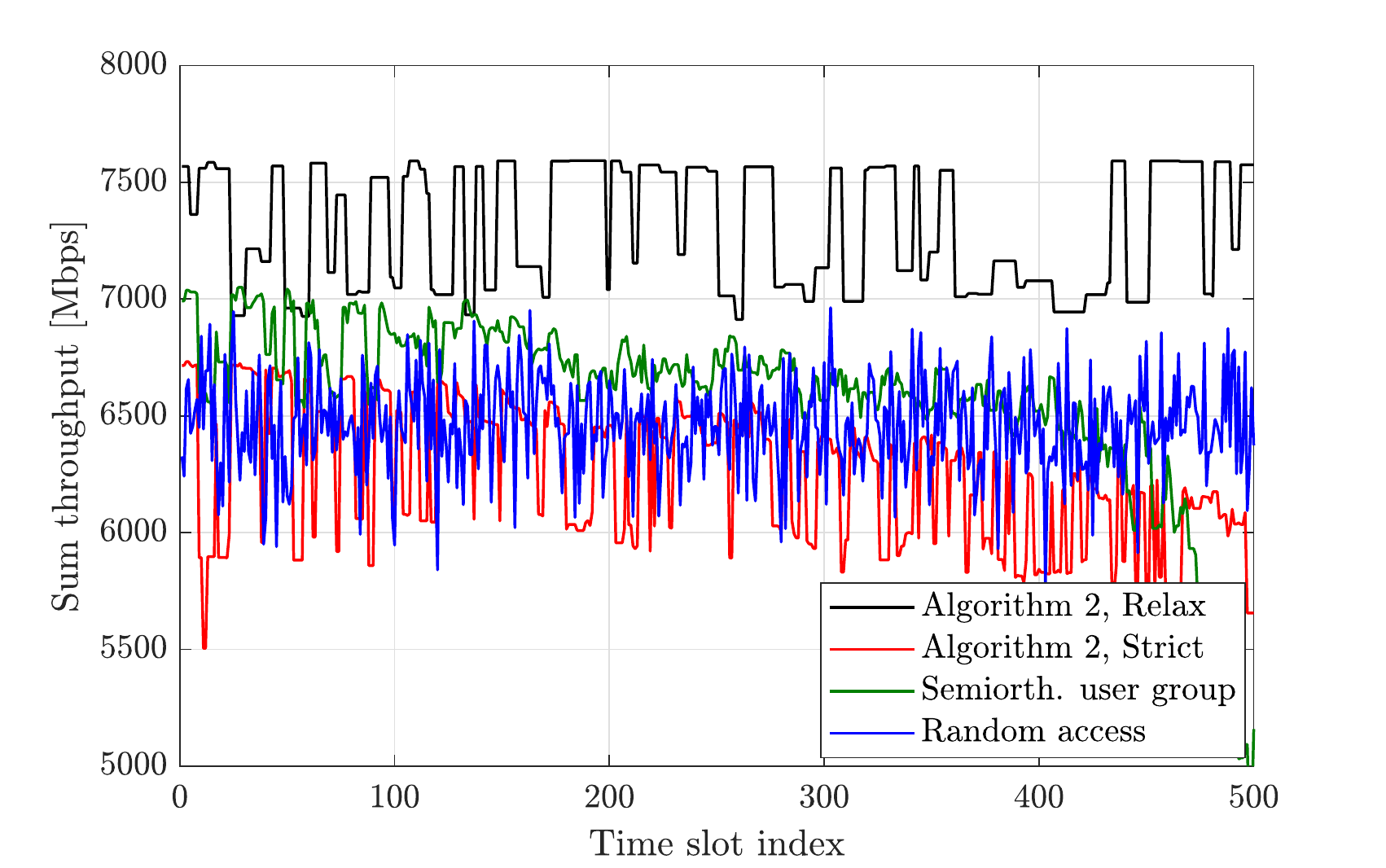} \vspace*{-0.25cm}
	\caption{The sum throughput [Mbps] over the observed window time with the power allocation.}
	\label{FigControlPowerSumRate}
	\vspace*{-0.2cm}
\end{figure}
\begin{figure}[t]
	\centering
	\includegraphics[trim=0.6cm 0.0cm 1.2cm 0.6cm, clip=true, width=3.5in]{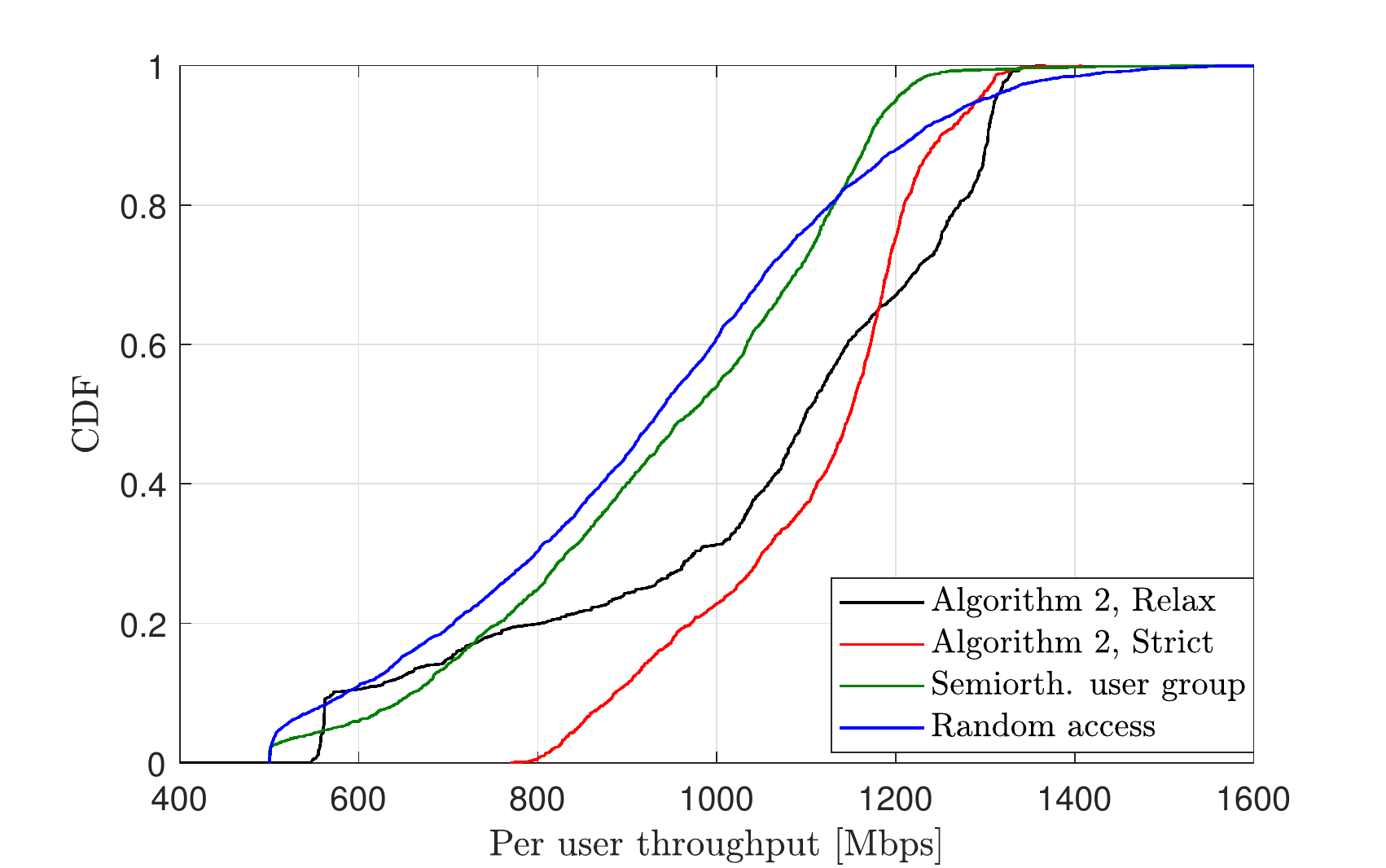} \vspace*{-0.25cm}
	\caption{The CDF of the per user throughput [Mbps] over the observed window time with the power allocation.}
	\label{FigResultedPerUserRateControl}
	\vspace*{-0.2cm}
\end{figure}
\begin{figure}[t]
	\centering
	\includegraphics[trim=0.6cm 0.0cm 1.2cm 0.6cm, clip=true, width=3.5in]{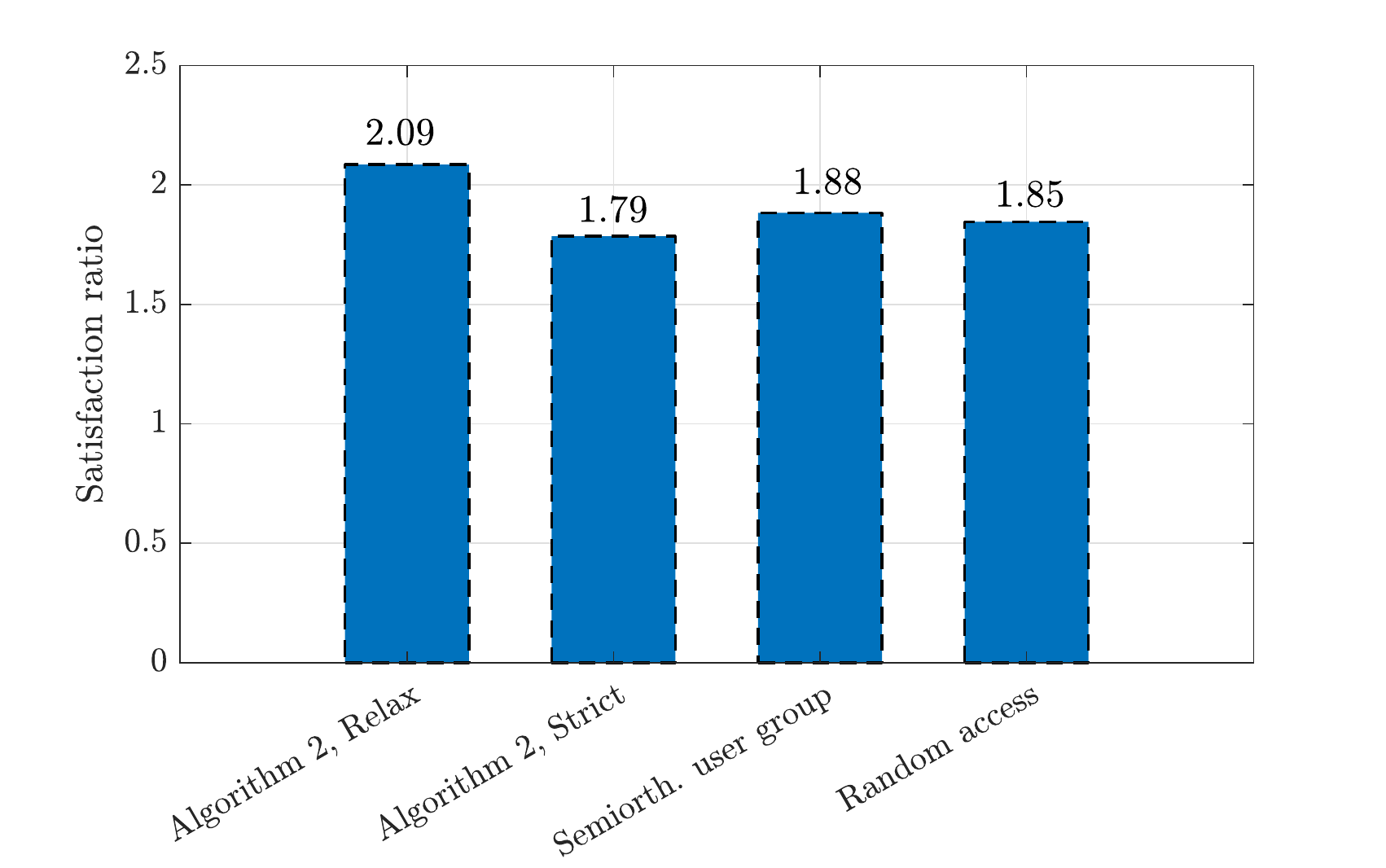} \vspace*{-0.25cm}
	\caption{The satisfaction ratio between the aggregated throughput [Mbps] in \eqref{eq:Rk} and the individual QoS requirement over the observed window time with power allocation.}
	\label{FigControlSatisRatio}
	\vspace*{-0.2cm}
\end{figure}
\begin{table}[t]
	\begin{center}
		\caption{Gain by Power Allocation.}
		\label{tab:table1}
		\begin{tabular}{|c|c|c|c|c|} % <-- Alignments: 1st column left, 2nd middle and 3rd right, with vertical lines in between
			\hline 
			Benchmark & Proposed, & Proposed, & Semiorth. & Random\\
			& Relax & Strict & user group & access \\
			\hline
			Sum throughput& $1.31\times$ & $1.20\times$ &$1.44\times$ & $1.47\times$\\
			\hline
			Per user throughput& $1.37\times$ & $1.20\times$ &$1.39\times$ & $1.46\times$\\
			\hline
			Satisfaction ratio & $1.31\times$  & $1.20 \times $& $1.38\times$& $1.47 \times$ \\
			\hline
		\end{tabular}
	\end{center}
   \vspace*{-0.2cm}
\end{table}
In Fig.~\ref{FigFixedPowerSumRate}, we plot the sum throughput [Mbps] as a function of the time slot index. Random access provides the worst sum throughput in most of the time slots, only $4419$ [Mbps] on average. However, it offers good performance in the last time slots, which is better than semiorthogonal user group and Algorithm~\ref{Algorithm1} with the strict requirement on the QoSs. Semiorthogonal user group outperforms random access by $7.6\%$ of the sum throughput with $4753$ [Mbps] on average. It becomes the worst in the last time slots where the available users have strongly correlated channels. Algorithm~\ref{Algorithm1}, with the strict requirements on the individual QoSs,  performs $18.6\%$ better than the baseline. By relaxing the QoS constraints, Algorithm~\ref{Algorithm1} gives the best sum throughput with the gain of $6.7\%$ compared to the second-best benchmark, i.e. Algorithm~\ref{Algorithm1}, Strict.

In Fig.~\ref{FigResultedPerUserRate}, we show the cumulative density function (CDF) of the scheduled users over the observed window time. Random access averagely provides the throughput of about $631$ [Mbps] per user, while semiorthogonal user group offers $679$ [Mbps]. Notably, Algorithm~\ref{Algorithm1} gives the highest per-user throughput with $1.37\times$ higher than semiorthogonal user group. Algorithm~\ref{Algorithm1} ensures all the scheduled users with their QoS requirements. In contrast,  $24.4\%$ and $14.5\%$ of user locations cannot be served with the requested QoS if the system deploys random access and semiorthogonal user group, respectively, due to no QoS guarantee in those benchmarks. It manifests the practical importance of Algorithm~\ref{Algorithm1}. To guarantee the individual QoS requirements, Algorithm~\ref{Algorithm1}, Strict, has ignored some users from service resulting in the highest per-user throughput for all the scheduled users.

The satisfaction ratio is computed as a faction between the served data throughput and the required QoS over the observed window time, which is numerically shown in Fig.~\ref{FigFixSatisRatio}. All the satisfaction ratios are more significant than one meaning that the satellite system can provide data throughput above the individual QoS requirements on average, even with a fixed power level. Even though some scheduled users may not be satisfied with their service in a few time slots, Algorithm~\ref{Algorithm1}, with the relaxed constraints, offers the good satisfaction ratio for an extended period. The following are Algorithm~\ref{Algorithm1} with the strict individual QoS requirements (Algorithm~\ref{Algorithm1}, Strict), semiorthogonal user group, and random access. Despite its simplicity, random access provides a pretty good satisfaction ratio,  $26.19\%$ lower than Algorithm~\ref{Algorithm1}, Relax.

We now testify to the contributions of power allocation. Fig.~\ref{FigControlPowerSumRate} plots the sum throughput [Mbps] for all the considered benchmarks over the observed window time. It shows up superior improvements of Algorithm~\ref{Algorithm2} with the relaxation on Stage~$1$. It verifies that many scheduled users with lower data throughput than their demands can be improved by optimizing the transmit power. On average, Algorithm~\ref{Algorithm2} with relaxation offers the sum throughput $7300$~[Mbps]. Following, semiorthogonal user group averagely offers the sum throughput $6593$~[Mbps]. Under the power allocation, this benchmark works well at the beginning, where many semiorthogonal channels are available. Algorithm~\ref{Algorithm2} with the strict requirements Stage~$1$ ignores  many scheduled users, so the sum throughput is only about $6250$~[Mbps]. Random access gives the average sum throughput is about $6461$~[Mbps].

In Fig.~\ref{FigResultedPerUserRateControl}, we display the CDF of the per scheduled user throughput [Mbps]. We observe that the system can provide the required QoSs to all the scheduled users thanks to the power allocation in Algorithm~\ref{Algorithm2}. Random access gives the average per user throughput is about $923$~[Mbps], while semiorthogonal user group is about $942$~[Mbps], $2.06\%$ improved than the baseline. Thanks to the power allocation, the gap between semiorthogonal user group and random access is shortened. Our proposed algorithms give the best performance of $1403$~[Mbps] and of $893$~[Mbps] for Algorithm~\ref{Algorithm2}, Strict, and Algorithm~\ref{Algorithm2}, Relax, respectively. 

The satisfaction ratio for the system with power allocation is shown in Fig.~\ref{FigControlSatisRatio}. On average, Algorithm~\ref{Algorithm2}, with the relaxation on Stage~$1$ (Algorithm~\ref{Algorithm2}, Relax), offers the best satisfaction level. In contrast, Algorithm~\ref{Algorithm2}, with the strict individual QoS requirements on Stage~$1$, yields the lowest satisfaction level with an improvement of only $1.79 \times$. Many users whose data throughput could have been improved in Stage~$2$ have been already ignored in Stage~$1$. The remaining benchmarks, comprising semiorthogonal user group and random access, also report good satisfaction ratios, which demonstrate a gain of about $88\%$ and $85\%$, respectively. Overall, Algorithm~\ref{Algorithm2}, Relax, outperforms the baseline with with a slight increase of approximately $16.7\%$. It unveils that power allocation can improve all the benchmarks dramatically, so the gaps among them are reduced.  

Table~\ref{tab:table1} summarizes the benefits of power allocation compared with the fixed power level. All the considered benchmarks show up superior gains from utilizing the power allocation. The minimum improvement comes from the proposed method with the strict individual QoS requirements, which is $20\%$ for both the sum and per-user throughput. Random access has the most remarkable improvement on the sum throughput up to $47\%$ by carefully allocating the transmit power to each user instead of simply using a fixed power level. Meanwhile, semiorthogonal user group demonstrates an improvement of about $39\%$ and $44\%$ for the sum and per-user throughput, respectively. In addition, compared to the fixed power level, the served throughput improves from $31\%$ to $47\%$ by the optimized power under the given set of the parameter settings.

\vspace*{-0.25cm}
\section{Conclusion} \label{Sec:Conclusion}
%\vspace*{-0.2cm}
Unlike previous works focused on user scheduling and power allocation for a particular time instance, this paper brought the same methodology but extended to multiple time slots. We proposed a user scheduling strategy for large-scale MB-HTS systems where many users simultaneously request to access the network. We formulated a total throughput optimization maximization problem in an observed window time subject to the individual QoS requirements. Due to the challenges in defeating the non-convexity, we proposed a heuristic algorithm to obtain a local solution with low computational complexity. The system can allocate radio resources to plenty of users. Numerical results manifested that all scheduled users have better QoSs than requested on average. Besides, the proposed algorithms offer higher sum throughput [Mbps] per time slot than the other benchmarks with about $20\%$.

Long-term satellite resource management is a promising research topic with many potential challenges. Each user can be scheduled multiple times in different time slots, resulting in instantaneous and aggregated throughput. Interesting future works should formulate and solve other optimization problems in satellite communications over an observed window time, such as beam hopping, gateway placement, and applying machine learning to reduce computational complexity and achieve better performance than traditional-based optimization approaches towards practical applications in real-time.
\vspace*{-0.25cm}
\appendix
\subsection{Useful Definitions} \label{Appendix:UsefulDef}
This appendix provides the useful definitions popularly utilized in handling geometric and sigomial programs. In particular,  Definition~\ref{Def1} perceives the perception of monomial, posynomial, and signomial functions, while Definition~\ref{Def2} provides the geometric program on the standard form.
\begin{definition} \label{Def1}
	Let us define $h(x_1, \ldots, x_{|\mathcal{K}(t)|})$ as a multivariate function of $x_1, \ldots, x_{|\mathcal{K}(t)|}$ as
	\begin{equation}
		h(x_1, \ldots, x_{|\mathcal{K}(t)|}) = c_m \prod\limits_{k \in \mathcal{K}(t)}   x_{k}^{b_{m,k}},
	\end{equation}
	then $h(x_1, \ldots, x_{|\mathcal{K}(t)|})$ is a monomial function if $c_m > 0$ and $b_{m,k}, \forall k,$ are real numbers.  As an extension, if the multivariate function $h(x_1, \ldots, x_{|\mathcal{K}(t)|})$ is defined by a summation of $\widetilde{M}$ terms as
	\begin{equation}
		h(x_1, \ldots, x_{|\mathcal{K}(t)|}) = \sum\limits_{m=1}^{\widetilde{M}}c_m \prod\limits_{k \in \mathcal{K}(t)}   x_{k}^{b_{m,k}}, 
	\end{equation}
	then $h(x_1, \ldots, x_{|\mathcal{K}(t)|})$ is a posynomial function if all $c_m, \forall m,$ are non-negative. Notice that $h(x_1, \ldots, x_{|\mathcal{K}(t)|})$ is a signomial function when at least one $c_m$ is negative.
\end{definition}
\begin{definition} \label{Def2}
	A geometric program is on standard form as
	\begin{equation} \label{Prob:geometricprog}
		\begin{aligned}
			& \underset{ \mathbf{x} \in \mathcal{X} }{\mathrm{maximize}}
			&&  f_0( \mathbf{x})  \\
			& \,\,\mathrm{subject \,to}
			&& f_k(\mathbf{x}) \leq 1, \forall k = 1, \ldots K,\\
			&& &   h_{k'} (\mathbf{x}) = 1, \forall k' = 1, \ldots, K',
		\end{aligned}
	\end{equation}
	where $\mathbf{x}$ is the optimization variable vector with the feasible domain $\mathcal{X}$. The objective function and the constraints fulfill the following conditions:
	\begin{itemize}
		\item The objective function $f_0 (\mathbf{x})$ is either a monomial or posynomial function. 
		\item The inequality constraints  $f_k(\mathbf{x}),\forall k,$ is either monomial or posynomial functions.
		\item The equality constraint  $h_{k'}(\mathbf{x})$ can be monomial functions.
	\end{itemize}
	Since a geometric program has a hidden convex structure, the globally optimal solution to problem~\eqref{Prob:geometricprog} can be obtained in polynomial time. Besides, \eqref{Prob:geometricprog} is a signomial program if at least one of those functions is  signomial, and therefore this problem is nonconvex.
\end{definition}
\subsection{Proof of Theorem~\ref{Theorem:SelectedUser}} \label{Appendix:SelectedUser}
We first prove that at the $t$-th outer iteration, the objective function of problem~\eqref{Problemv1} is non-decreasing along with inner iterations. Let us introduce a constant $\alpha_m^t$ with $ m \in \{|\mathcal{K}(t-1)|, \ldots, M\},$ as follows
\begin{equation}
	\alpha_m^t = \sum\limits_{k' \in \widetilde{\mathcal{K}}_m ^{\ast} (t) } R_{k'} (\widetilde{\mathcal{K}}_m ^{\ast} (t) ),
\end{equation}
then by exploiting \eqref{eq:Ser1}, the following series of inequality holds
\begin{equation}
	\alpha_M^t \geq \alpha_{M-1}^t \geq \ldots \geq \alpha_{|\mathcal{K}(t-1)|}^t,
\end{equation}
which demonstrates the non-decreasing property of the sum throughput in every time slot. Due to the non-negative property of the instantaneous channel capacity, we further obtain
\begin{equation}
	\sum\limits_{t'=1}^{t} \sum\limits_{k \in \mathcal{K}(t') } R_k(\mathcal{K}(t') ) \geq \sum\limits_{t'=1}^{t-1} \sum\limits_{k \in \mathcal{K}(t') } R_k(\mathcal{K}(t')),
\end{equation}
which manifests the fact that the objective function of problem~\eqref{Problemv1} is non-decreasing along with iterations. For a given set of transmit power coefficients, the instantaneous throughput of scheduled user is finite. Hence, the objective function of problem~\eqref{Problemv1} is upper bounded and Algorithm~\ref{Algorithm1} converges to a fixed point solution. Additionally, \eqref{eq:Ser2} ensures the individual QoS requirements and therefore we conclude the proof.
\vspace*{-0.25cm}
\subsection{Proof of Lemma~\ref{theorem:geometricprog}} \label{Appendix:geometricprog}
We first convert the SINR constraint of scheduled user~$k$ from a signomial to a posynomial. Mathematically, we reformulate the SINR constraint of this user to as 
\begin{equation} \label{eq:FirstSINR}
\begin{split}
	& \gamma_{k}(t)\sum\limits_{k' \in \mathcal{K}(t) \setminus \{ k\}} p_{k'}(t) \left| \mathbf{h}_{k}^H \mathbf{w}_{k'}(t) \right|^2 + \gamma_{k}(t) \sigma^2  \\
	&   \leq  \sum\limits_{k' \in \mathcal{K}(t) } p_{k'}(t) \left| \mathbf{h}_{k}^H \mathbf{w}_{k'}(t) \right|^2 + \sigma^2,
\end{split}
\end{equation}
then, in order to process further, we introduce a function $g_{k}(\{ p_{k'}(t) \} )$ to denote the right-hand side of \eqref{eq:FirstSINR} as
\begin{equation} \label{eq:gkt}
	g_{k}(\{ p_{k'}(t) \} ) = \sum\limits_{k' \in \mathcal{K}(t) } p_{k'}(t) \left| \mathbf{h}_{k}^H \mathbf{w}_{k'}(t) \right|^2 + \sigma^2.
\end{equation}
Utilizing the arithmetic mean-geometric mean inequality in Lemma~\ref{lemma:AMGM}, we lower bound $g_{k}(\{ p_{k'} (t) \} )$ by $\tilde{g}_{k}(\{ p_{k'}(t) \} )$ as
\begin{equation}
	g_{k}(\{ p_{k'}(t) \} ) \geq \tilde{g}_{k}(\{ p_{k'}(t) \} ),
\end{equation}
with the weights fulfilling the condition~\eqref{eq:WeightsSINRv1}. As a consequence, the constraint \eqref{eq:Ratev1} for scheduled user~$k$ with $k \in \mathcal{K}(t)$ is approximated to as shown in \eqref{eq:SINRConstraintv1}. It should be noticed that  \eqref{eq:SINRConstraintv1} is a posynomial constraint due to the left-hand side is a posynomial function. Problem~\eqref{Prob:SubProbv5} is hence a geometric program on standard form.
\subsection{Proof of Theorem~\ref{theorem:KKT}} \label{Appendix:KKT}
At the $t$-th time slot, we define the feasible domain of the signomial optimization problem~\eqref{Prob:SubProbv4} as
\begin{multline}
	\mathcal{P}(t) = \\	\Big\{ p_{k}(t), \forall k \in \mathcal{K}(t):  p_{k}(t)\in \mathbb{R}_{+}, \sum_{k \in \mathcal{K}(t)} p_{k}(t) \leq P_{\max} \Big\},
\end{multline}
which is compact set. The globally optimal solution to problem~\eqref{Prob:SubProbv5} obtained at the $i$-th iteration is included in the following set 
\begin{equation} \label{eq:OptPower}
\mathcal{I}^{(i)} (t) = \big\{ p_k^{\ast,(i)}(t), \forall k \in \mathcal{K}(t)\big\},
\end{equation}
then $\mathcal{I}^{(i)} (t) \subseteq \mathcal{P}(t)$. Let us denote $\tilde{g}_{k}^{(i)} \big(\{ p_{k'}^{(i)}(t) \} \big)$ the $\tilde{g}_{k} \big(\{ p_{k'}(t) \} \big)$ function at the $i$-th iteration. By utilizing \eqref{eq:gkm} and \eqref{eq:gkt}, we can construct the following properties
\begin{align}
g_{k}(\{ p_{k'}(t) \} ) &\geq \tilde{g}_{k} \big(\{ p_{k'}^{(i)}(t) \} \big), \label{eq:gki}\\
g_{k} \big(\{ p_{k'}^{\ast,(i)}(t) \} \big) &= \tilde{g}_{k}^{(i)} \big(\{ p_{k'}^{(i)}(t) \} \big), \label{eq:gkiv1} \\
\frac{\partial g_{k} \big(\{ p_{k'}^{\ast,(i)}(t) \} \big)}{\partial p_k^{\ast,(i)}(t)} &= \frac{\partial \tilde{g}_{k}^{(i+1)} \big(\{ p_{k'}^{(i)}(t) \} \big)}{\partial p_k^{\ast,(i)}(t)}. \label{eq:gkiv2}
\end{align}
The lower bound of $g_{k}(\{ p_{k'}(t) \} )$ in \eqref{eq:gki} applied to all the scheduled users in the $t$-th time slot indicates that the global optimum to the geometric program \eqref{Prob:SubProbv5} is a feasible point to the signomial program \eqref{Prob:SubProbv4} since
\begin{equation}
\frac{g_{k}(\{ p_{k'}(t) \} )}{D_k (\{ p_{k'}(t) \} )} \geq \frac{\tilde{g}_{k} \big(\{ p_{k'}^{(i)}(t) \} \big)}{D_k (\{ p_{k'}(t) \} )} \geq \gamma_k(t) -1, \forall k \in \mathcal{K}(t),
\end{equation}
where $D_k (\{ p_{k'}(t) \} ) \triangleq \sum\nolimits_{k' \in \mathcal{K}(t) \setminus \{ k\} } p_{k'} (t) \big|\mathbf{h}_{k}^H \mathbf{w}_{k'}(t)|^2 +\sigma^2$. Consequently, we can construct the following series of the inequalities
\begin{equation} \label{eq:Series}
\begin{split}
\ldots & \geq \widehat{\mathrm{SINR}}_k^{(i+1)} \big(\mathcal{I}^{(i)} (t)  \big) \stackrel{(a)}{=} \mathrm{SINR}_k \big(\mathcal{I}^{(i)} (t) \big) \\
&\stackrel{(b)}{\geq} \widehat{\mathrm{SINR}}_k^{(i)} \big(\mathcal{I}^{(i)} (t) \big) \stackrel{(c)}{\geq} \widehat{\mathrm{SINR}}_k^{(i)}\big(\mathcal{I}^{(i-1)} (t) \big) = \ldots,
\end{split}
\end{equation}
where $\mathrm{SINR}_k \big(\mathcal{I}^{(i)} (t) \big)$ is the instantaneous SINR value of user~$k$ defined in \eqref{eq:SINRk} with the optimal power coefficients in \eqref{eq:OptPower}, while
\begin{equation}
\widehat{\mathrm{SINR}}_k (t) \triangleq \frac{\tilde{g}_{k} \big(\{ p_{k'}(t) \} \big)}{D_k (\{ p_{k'}(t) \} )},
\end{equation}
is the approximate SINR expression by using the arithmetic mean-geometric mean inequality in Lemma~\ref{lemma:AMGM}. In \eqref{eq:Series}, $(a)$ is obtained by using \eqref{eq:gkiv1}, which indicates that the approximate and original SINR values are equal to each other at the global optimum to problem~\eqref{Prob:SubProbv5}; $(b)$ is because of \eqref{eq:gki} meaning that the optimal solution to \eqref{Prob:SubProbv5} is always feasible to the original problem~\eqref{Prob:SubProbv4}; and $(c)$ is obtained by the fact that we enable to obtain the globally optimal solution to problem  \eqref{Prob:SubProbv4} thanks to the hidden convex structure. The series of inequalities in \eqref{eq:Series} indicates that 
\begin{equation} \label{eq:gammakit}
 \prod\limits_{k \in \mathcal{K}(t) }   \gamma_{k}^{(i+1), \ast}(t)  \geq \prod\limits_{k \in \mathcal{K}(t) }   \gamma_{k}^{(i), \ast}(t), 
\end{equation}
where $\gamma_{k}(t)^{(i),\ast}$ is the globally optimal solution to problem~\eqref{Prob:SubProbv5} at the $i$-th iteration. Consequently, the objective function is non-decreasing along with iterations. By virtue of the limited power budget constraint \eqref{eq:PowerBudget1},  the SINR values are bounded from above, i.e., 
\begin{equation}
\widehat{\mathrm{SINR}}_k(\{ p_{k'}(t)\}), \mathrm{SINR}_k(\{ p_{k'}(t)\}) < \infty, \forall k \in \mathcal{K}(t), 
\end{equation}
which ensure that solving \eqref{Prob:SubProbv5} and updating the weight values as in \eqref{eq:gkm} converge after a finite number of iterations. If the convergence holds at the $i$-th iteration, i.e.,
\begin{equation} \label{eq:Eqv}
\prod\limits_{k \in \mathcal{K}(t) }   \gamma_{k}^{(i+1), \ast}(t)  = \prod\limits_{k \in \mathcal{K}(t) }   \gamma_{k}^{(i), \ast}(t), 
\end{equation}
then the global optimal solution set $ \mathcal{I}^{(i)} (t)$ should be a solution at the $(i+1)$-th iteration. Otherwise, \eqref{eq:Eqv} does not hold. Since the feasible domain of problem~\eqref{Prob:SubProbv4} is compact, the Slater's condition is satisfied and \eqref{eq:gki}--\eqref{eq:gkiv2} ensure that the KKT conditions coincide for both problems \eqref{Prob:SubProbv4} and \eqref{Prob:SubProbv5}. The solution obtained by Algorithm~\ref{Algorithm2} should be a KKT point to problem~\eqref{ProblemvData}, as stated in the theorem.
\vspace*{-0.25cm}
\bibliographystyle{IEEEtran} 
\bibliography{IEEEabrv,refs}
\end{document}